\newtheorem{lemma}{Lemma}
\newtheorem{theorem}{Theorem}
\theoremstyle{definition}
\newtheorem{definition}{Definition}
\let\oldnl\nl
\newcommand{\nonl}{\renewcommand{\nl}{\let\nl\oldnl}}
\newcommand{\iinput}{\mathtt{input}}
\newcommand{\Call}{{\mathcal{C}_{all}}}
\newcommand{\Red}{\mathtt{A}}
\newcommand{\Blue}{\mathtt{B}}
\newcommand{\tie}{\mathtt{T}}
\newcommand{\Rank}{\mathtt{rank}}
\newcommand{\Role}{\mathtt{role}}
\newcommand{\leader}{\mathtt{leader}}
\newcommand{\resetcount}{\mathtt{resetcount}}
\newcommand{\Resetting}{\mathtt{Resetting}}
\newcommand{\Settled}{\mathtt{Settled}}
\newcommand{\Unsettled}{\mathtt{Unsettled}}
\newcommand{\timer}{\mathtt{timer}}
\newcommand{\ans}{\mathtt{answer}}
\newcommand{\Pem}{{\mathcal{P}_{EM}}}
\title{Time- and Space-Optimal Silent Self-Stabilizing Exact Majority in Population Protocols}
\date{}
\author[1]{Haruki Kanaya}
\affil[1]{Nara Institute of Science and Technology, Nara, Japan}
\author[1]{Ryota Eguchi}
\author[1]{Taisho Sasada}
\author[2]{Fukuhito Ooshita}
\affil[2]{Fukui University of Technology, Fukui, Japan}
\author[1]{Michiko Inoue}
\begin{document}

\maketitle

\sloppy

\begin{abstract}
We address the self-stabilizing exact majority problem in the population protocol model, introduced by Angluin, Aspnes, Diamadi, Fischer, and Peralta (2004).
In this model, there are \( n \) state machines, called agents, which form a network. 
At each time step, only two agents interact with each other, and update their states.
In the self-stabilizing exact majority problem, each agent has a fixed opinion, \( \Red \) or \( \Blue \), and stabilizes to a safe configuration in which all agents output the majority opinion from any initial configuration.

In this paper, we show the impossibility of solving the self-stabilizing exact majority problem without knowledge of \( n \) in any protocol.  
We propose a silent self-stabilizing exact majority protocol, which stabilizes within \( O(n) \) parallel time in expectation and within \( O(n \log n) \) parallel time with high probability, using \( O(n) \) states, with knowledge of $n$.  
Here, a silent protocol means that, after stabilization, the state of each agent does not change.  
We establish lower bounds, proving that any silent protocol requires \( \Omega(n) \) states, \( \Omega(n) \) parallel time in expectation, and \( \Omega(n \log n) \) parallel time with high probability to stabilize.  
Thus, the proposed protocol is time- and space-optimal.
\end{abstract}

\section{Introduction}
The population protocol model, introduced by Angluin, Aspnes, Diamadi, Fischer, and Peralta~\cite{Angluin2006}, is a model of a passively mobile sensor network.
In this model, there are $n$ state machines (called \emph{agents}), and these agents form a network (called the \emph{population}).
Each agent lacks a unique identifier and updates its state through pairwise communication (called \emph{interaction}).
At each time step, only one pair of agents interacts, chosen by a uniform random scheduler.
In this paper, we assume that the network is a complete graph, meaning that each agent can interact with all other agents.
The time complexity is measured in \emph{parallel time}, which is defined as the number of interactions divided by \( n \).

The majority problem requires each agent to have a fixed opinion, either \( \Red \) or \( \Blue \), and to determine the majority opinion.
The agents output the majority opinion, \( \Red \) or \( \Blue \), if there is no tie (i.e., the numbers of agents with input \( \Red \) and agents with input \( \Blue \) are not equal); otherwise, they output \( \tie \).
There are two types of majority problems: approximate and exact.
The approximate majority problem~\cite{Angluin2008,Condon2020} allows a small error; that is, a protocol may stabilize to an incorrect answer with low probability.
On the other hand, the exact majority problem does not allow errors; that is, a protocol stabilizes to a correct answer with probability 1.
The exact majority problem with designated initial states has been widely studied~\cite{AAG18,AGRV15,AAE06,BBBEHKK22,BKKP20,BEFKKR18,Berenbrink2021,BCER17,STDtimespaceop,DV2010,KU18,MAABS15,MAS16,PVV2009}, and a time- and space-optimal protocol~\cite{STDtimespaceop}, which converges within \( O(\log n) \) parallel time and uses \( O(\log n) \) states, has been proposed.

We address the self-stabilizing exact majority problem, which requires each agent to have a fixed opinion, either \( \Red \) or \( \Blue \), and to determine the exact majority opinion starting from any configuration.  
In prior research, a study~\cite{LSMajority} uses loose-stabilization~\cite{SUDO2012100}, which relaxes the closure property of stabilization.  
However, to the best of our knowledge, no study has solved the self-stabilizing exact majority problem.  
We solve this problem with initial knowledge of \( n \).

With initial knowledge of \( n \), there have been some studies on another problem: the self-stabilizing leader election~\cite{BCCDNSX21,CIW12,GGS25}, in which agents elect a unique leader.
In~\cite{BCCDNSX21}, Burman, Chen, Chen, Doty, Nowak, Severson, and Xu solve the self-stabilizing ranking problem, in which agents are assigned unique ranks from \([1, n]\).
They proposed two protocols: a silent protocol, which is both time- and space-optimal ($O(n)$ parallel time, $O(n)$ states), and a non-silent protocol, which is time optimal ($O(\log n)$ parallel time) but requires $\exp(O(n^{\log n} \cdot \log n))$ states.
Here, a silent protocol refers to a protocol in which, after stabilization, the state of each agent does not change.

In the non-silent self-stabilizing ranking protocol~\cite{BCCDNSX21}, agents are ranked by assigning unique names in \( [1, n^3] \). Upon stabilization, each agent holds a list of all agents' names and determines its rank in \( [1, n] \) from the name list.
This protocol works correctly even if the name space is reduced to half its size, while still maintaining the same time complexity and number of states, since Lemma 5.1 of~\cite{BCCDNSX21} still holds, as the name space is \( O(n^3) \).
Thus, using this property, the exact majority problem can be solved by assigning agents with input \( \Red \) a name space in \( [1, {n^3}/2] \) and agents with input \( \Blue \) a name space in \( [{n^3}/2 + 1, n^3] \).  
Also, each agent can determine its rank in \([1, n]\) from the name list.
This protocol is time-optimal since, in self-stabilization, all agents must interact at least once, and by the coupon collector's problem, \( \Omega(n\log{n}) \) interactions (which corresponds to \( \Omega(\log{n}) \) parallel time) are required.

\subsection{Our Contribution}

In this paper, we address a silent protocol that solves the self-stabilizing exact majority problem.
Our contribution is summarized in Table~\ref{tab:result}.
First, we show the impossibility that any protocol without knowledge of  $n$ cannot solve the self-stabilizing exact majority.
Second, we also show the lower bounds: any silent protocol with knowledge of \( n \) requires at least \( n\) states, and any silent protocol requires at least \( \Omega(n) \) parallel time in expectation and \( \Omega(n \log n) \) parallel time with high probability to stabilize.
Here, the phrase ``with high probability'' refers to with a probability of \( 1 - O(1/n) \).
Finally, we propose a silent protocol, $\Pem$, which stabilizes to a silent configuration \footnote{
A silent protocol stabilizes to a silent configuration; Thus, its time complexity is measured by silence time, where silence time is defined as parallel time until reaching a silent configuration.  
However, comparing the lower bound of stabilization time with the upper bound of silence time is not an issue, since stabilization time is no more than silence time.
} within $O(n)$ parallel time in expectation and within $O(n\log{n})$ parallel time with high probability using $O(n)$ states, with knowledge of \( n \).

\begin{table}[htb]
    \centering
    \caption{Overview of our results. The variable $n$ denotes the number of agents. W.H.P.\ means with high probability.}
    \label{tab:result}
    \begin{tabular}{|c|c|c|c|c|}
        \hline
         & knowledge & states & expected parallel time & parallel time W.H.P. \\
        \hline
        Impossibility & without $n$ & - & - & -\\
        Lower Bound (space) & $n$ & $n$ & - & -\\
        Lower Bound (time) & $n$ & - & $\Omega(n)$ & $\Omega(n\log{n})$\\
        Protocol & $n$ & $O(n)$ & $O(n)$ & $O(n\log{n})$\\
        \hline
    \end{tabular}
\end{table}

\subsection{Organization of This Paper}
Section~\ref{sec:preliminaries} defines the model and problem. Section~\ref{sec:tools} introduces existing protocols used in our protocol.
In Section~\ref{sec:IMP}, we show that any protocol without knowledge of \( n \) cannot solve the self-stabilizing exact majority problem.  
Section~\ref{sec:Bound} presents lower bounds for any silent protocol that solves the self-stabilizing exact majority problem and proposes a protocol that matches these lower bounds.
We conclude and discuss future directions in Section~\ref{sec:conclude}.

\section{Preliminaries}\label{sec:preliminaries}
We denote the set of positive integers by $\mathbb{N}$, and the set of non-negative integers by $\mathbb{N}_{0}$.

\subsection{Population Protocols}
A \emph{population} is represented by a bidirectional complete graph $G=(V,E)$, where $V$ denotes the set of agents, and $E=\{(u,v)\in V\times V \mid u\ne v\}$ denotes the set of ordered pairs of agents that can interact, and let $n=|V|$.
Since \( G \) is a bidirectional complete graph in this paper, we simply represent the population as \( V \), the set of agents.
A protocol is defined as a 5-tuple $\mathcal{P}=(Q,X,Y,\delta,\pi_{out})$, where $Q$ is the set of agents states, $X$ is the set of input symbols, $Y$ is the set of output symbols, $\delta:(Q\times X)\times (Q\times X) \to Q\times Q$ is the state transition function, and $\pi_{out}:Q\times X\to Y$ is the output function.
The variable \( \mathtt{var} \) of agent \( u \) is denoted as \( u.\mathtt{var} \).  
Similarly, the input of each agent \( u \) is denoted by \( u.\iinput \).
Each agent outputs \( \pi_{out}(s, x) \in Y \) at each time step when it is in state \( s \in Q \) and its input is \( x \in X \).
In each interaction, the interacting agents \( u \) and \( v \) are assigned the roles of initiator and responder, respectively, breaking the symmetry of the interaction.
Suppose that two agents $u,v$ interact, with $u$ as the initiator and $v$ as the responder, while they are in states $p$ and $q$, respectively, and their inputs are $x$ and $y$ respectively.
They then update their states to $p'$ and $q'$, respectively, where $\delta((p,x),(q,y))=(p',q')$.

A \emph{configuration} $C:V\to Q\times X$ represents the state and input of all agents.
A configuration $C$ changes to $C'$ via an interaction $(u,v)\in E$, denoted by $C\xrightarrow{(u,v)}C'$ if $(C'(u),C'(v))=\delta(C(u),C(v))$ and $\forall w\in V\setminus \{u,v\}:C'(w)=C(w)$.
When a configuration $C$ changes to $C'$ via an interaction, we say that $C$ can change to $C'$ denoted by $C\rightarrow C'$.
The set of all configurations by a protocol $\mathcal{P}$ is denoted by $\Call(\mathcal{P})$.
A uniform random scheduler \( \Gamma = \Gamma_0, \Gamma_1, \dots \) determines which ordered pair of agents interacts at each step, where \( \Gamma_t \in E \) (for \( t \geq 0 \)) is a random variable satisfying  \( \forall (u,v) \in E, \forall t \in \mathbb{N}_{0}: \Pr(\Gamma_t = (u,v)) = \frac{1}{n(n-1)} \).
Similarly, a deterministic scheduler $\gamma=\gamma_0,\gamma_1,\dots$ determines which ordered pair of agents interacts at each time step, where $\gamma_i\in E$ (for $i\ge 0$) is a predetermined value.
Note that $\gamma$ can be a possible value of $\Gamma$ with positive probability.

An \emph{execution} of \( \mathcal{P} \) is defined as an infinite sequence of configurations \( \Xi_{\mathcal{P}}(C_0, g)\\ = C_0, C_1, \dots \) that starts from an initial configuration \( C_0 \in \Call(\mathcal{P}) \), and satisfies \( \forall i \in \mathbb{N} : C_{i-1} \to C_i \), where $g$ is a uniform random or deterministic scheduler.
Note that if \( g \) is a uniform random scheduler, then \( C_i \) is a random variable.
A configuration $C'$ is reachable from $C_0$ if there exists an execution $\Xi_{\mathcal{P}}(C_0, \Gamma)=C_0,C_1,\dots$ such that $\exists i\in\mathbb{N}:C_i=C'$.
In an execution under a uniform random scheduler, if a configuration \( C \) appears infinitely often, then every configuration reachable from \( C \) also appears infinitely often\footnote{
This property holds since a uniform random scheduler is globally fair with probability 1, as shown in~\cite{chatzigiannakis_et_al:DagSemProc.09371.4}.
}. 
An execution is said to reach a set of configurations \( \mathcal{C} \) if it reaches a configuration in \( \mathcal{C} \).
Similarly, an execution is said to belong to a set of configurations \( \mathcal{C} \) if its initial configuration is in \( \mathcal{C} \).
We assume that the input remains unchanged during execution, meaning that \(\forall v \in V ,\forall i \in \mathbb{N} : C_i(v).\iinput = C_0(v).\iinput \) holds for an execution \( \Xi_{\mathcal{P}}(C_0, g) = C_0, C_1, \dots \), where $g$ is a uniform random or deterministic scheduler.
A configuration \( C \) is safe if and only if, for any execution starting from \( C \), the outputs of all agents never change.  
Similarly, a configuration \( C \) is silent if and only if, for any execution starting from \( C \), the states of all agents never change.
A protocol is silent if and only if an execution reaches a silent configuration with probability 1.

The time complexity of a protocol is defined as the number of interactions divided by \( n \).  
This is referred to as either \emph{parallel time} or simply \emph{time}.  
The stabilization time is defined as the time until an execution of the protocol under a uniform random scheduler reaches a safe configuration.  
Similarly, the silence time is defined as the time until an execution of the protocol under a uniform random scheduler reaches a silent configuration.
The phrase \textit{``with high probability''} refers to with a probability of \( 1 - O(1/n) \).

\subsection{Self-Stabilizing Exact Majority}
The self-stabilizing exact majority problem requires that each agent outputs the exact majority opinion from the two types of opinions, $\Red$ and $\Blue$, held by agents, starting from any initial configuration.  
We define \( V_a \) as the set of agents with the opinion \( \Red \), and \( V_b \) as the set of agents with the opinion \( \Blue \).
Note that we assume the input of each agent does not change during execution; thus, \( V_a \) and \( V_b \) remain unchanged throughout execution.  
The exact majority opinion is as follows:  
If \( |V_a| < |V_b| \), the exact majority opinion is \( \Blue \); if \( |V_a| > |V_b| \), it is \( \Red \); otherwise, it is \( \tie \).

\begin{definition}
A protocol $\mathcal{P} = (Q, X, Y, \delta, \pi_{out})$ is a self-stabilizing exact majority protocol if and only if $X = \{\Red, \Blue\}$, $Y = \{\Red, \Blue, \tie\}$, and the following conditions hold for any initial configuration $C_0 \in \Call(\mathcal{P})$: \begin{itemize}
    \item For any safe configuration $C_{\text{safe}}$ reachable from $C_0$, it holds that $\forall v \in V: \pi_{out}(C_{\text{safe}}(v)) = y$, where $y \in Y$ is the exact majority opinion of agents. 
    \item Any execution starting from $C_0$ reaches safe configurations with probability 1.
\end{itemize}
\end{definition}

\section{Tools}\label{sec:tools}
\subsection{Epidemic Protocol}
The epidemic protocol proposed by Angluin, Aspnes, and Eisenstat~\cite{AAE06} is used in \textsc{Propagate-Reset} described later.
In this protocol, each agent has a variable \( x \in \{0,1\} \), and when two agents \( u \) and \( v \) interact, \( u.x \) and \( v.x \) are both updated to \( \max(u.x, v.x) \).
Suppose that all agents have \( x = 0 \).  
Once one agent's \( x \) becomes 1, all agents' \( x \) will become 1 within \( \Theta(\log{n}) \) time in expectation and with high probability.

\subsection{Ranking Protocol}\label{subsec:ranking}
In this subsection, we explain the silent self-stabilizing ranking protocol proposed by Burman et al.~\cite{BCCDNSX21} since we will use it in Section~\ref{subsec:em}.
A self-stabilizing ranking protocol assigns agents unique ranks in \([1, n]\) from any initial configuration, and the agents maintain their ranks forever.
Their protocol is divided into two parts, \textsc{Propagate-Reset} and \textsc{Optimal-Silent-SSR}.
In their protocol, an agent has a $\Role\in \{\Resetting,\Settled, \Unsettled\}$ to reduce the number of states, and they define variables for each role.
The number of states becomes the sum of the states required for each $\Role$.
We briefly explain their protocol, but refer to~\cite{BCCDNSX21} for more details and proofs.

\subsubsection{\textsc{Propagate-Reset}}
In this protocol, when some inconsistency is detected in the population, all agents are reset to a specific state with high probability.
Each agent uses two variables, $\Role$ and $\resetcount$, which represent the agent's role and a timer used to determine whether a reset has been completed.
When agents detect some inconsistency, they set \( \Role \) to \( \Resetting \) and \( \resetcount \) to \( R_{max} (= 60 \log n) \).
The role of $\Resetting$ propagates to all agents through the epidemic protocol, treating \( \Resetting \) as 1 and all other roles as 0.
Eventually, all agents are reset within $O(\log{n})$ time with high probability.
After resetting, all inconsistencies are eliminated.
We have following lemmas from~\cite{BCCDNSX21}.

\begin{lemma}[Lemma 3.2 in \cite{BCCDNSX21}]\label{ssrk:prop}
If some inconsistency is detected, all agents' $\Role$ become $\Resetting$ within $4\log{n}$ time with high probability.  
\end{lemma}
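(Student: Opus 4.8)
The statement is, at heart, the correctness of the epidemic (one-way rumor) process on the complete graph of $n$ agents, so the plan is to reduce to that process and then run a concentration argument. First I would record two structural facts about \textsc{Propagate-Reset}: (i) detecting an inconsistency causes at least one agent to set $\Role \leftarrow \Resetting$ (with $\resetcount = R_{max}$); and (ii) the only way an agent leaves the $\Resetting$ role is by its $\resetcount$ decrementing to $0$. Since $R_{max} = 60\log n$ while we only care about the first $4n\log n$ interactions, a Chernoff bound on the number of interactions a fixed agent participates in (a binomial with mean at most $8\log n$), together with a union bound over the $n$ agents, shows that with probability $1 - O(1/n)$ no agent's $\resetcount$ reaches $0$ within this window. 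Hence, throughout this window, the set $S_t$ of agents with $\Role = \Resetting$ is non-decreasing, and whenever an agent of $S_t$ interacts with an agent outside $S_t$ the latter joins $S_t$; this is exactly epidemic spreading with $\Resetting$ playing the role of the value $1$, started from $|S_0| \geq 1$.

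It then remains to bound the time for $|S_t|$ to grow from $1$ to $n$. Writing $i = |S_t|$, a single interaction enlarges $S_t$ with probability $\frac{2i(n-i)}{n(n-1)}$. I would split the analysis at $i = n/2$. In the growth phase $1 \le i \le n/2$ the success probability is at least $\tfrac{i}{n}$; grouping interactions into blocks and applying a Chernoff bound to the (stochastically dominated) binomial number of successes per block shows that $|S_t|$ at least doubles every $O(n)$ interactions, so $|S_t| \ge n/2$ after $O(n\log n)$ interactions with high probability. In the saturation phase $i > n/2$, each still-uninfected agent joins $S_t$ on a given interaction with probability at least $\frac{1}{n-1}$, so a coupon-collector/Chernoff estimate over the at most $n/2$ remaining agents gives $S_t = V$ after a further $O(n\log n)$ interactions with high probability. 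Summing the two phases, tracking the constants hidden in the block sizes and Chernoff exponents, and taking a final union bound over the $O(\log n)$ events invoked (plus the ``no $\resetcount$ hits $0$'' event) yields the bound of $4\log n$ parallel time with probability $1 - O(1/n)$. Alternatively, one may simply invoke the known $O(\log n)$-time bound for a single epidemic~\cite{AAE06} and verify that its constant is at most $4$.

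The main obstacle is making the constant explicit rather than an unspecified $O(\cdot)$. The delicate part is the very early growth, where $|S_t| = 1$: there a single geometric waiting time has mean $\Theta(n)$ but only an exponential tail, so a naive per-step union bound would cost an extra $\log n$ factor and overshoot $4\log n$. The fix is to bound the \emph{total} waiting time of all the low-count steps together, via a moment-generating-function (product) argument rather than step by step, which shows these steps contribute only $O(n\log\log n)$ interactions with high probability; the remaining phases are then routine Chernoff estimates. The ``$\Resetting$ never recedes within $4\log n$ time'' step is straightforward given $R_{max} = 60\log n \gg 4\log n$.
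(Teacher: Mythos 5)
Your proposal is correct in outline, but it takes a fundamentally different route from the paper for a simple reason: the paper does not prove this statement at all. It is imported verbatim as Lemma~3.2 of~\cite{BCCDNSX21}, and the surrounding text only sketches \textsc{Propagate-Reset} informally (``the role of $\Resetting$ propagates to all agents through the epidemic protocol''). What you have written is a self-contained reconstruction of the proof that lives in the cited paper. As such, you necessarily had to reconstruct internal details of \textsc{Propagate-Reset} that this paper never specifies --- in particular your facts (i) and (ii), that an agent leaves $\Resetting$ only by its $\resetcount$ decrementing to $0$ and that $\resetcount$ decrements at most once per interaction the agent participates in. These are consistent with the summary here and with $R_{max}=60\log n$, and your Chernoff-plus-union-bound argument that no $\resetcount$ expires during the first $4n\log n$ interactions (mean $8\log n$ participations per agent) is sound, but strictly speaking those facts must be checked against~\cite{BCCDNSX21}, not against anything in this paper. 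On the quantitative side, your instinct about the early phase is right: a na\"ive per-level budget fails because the single geometric at $|S_t|=1$ (and symmetrically at $|S_t|=n-1$) already needs $\Theta(n\log n)$ interactions to have failure probability $O(1/n)$, so the global moment-generating-function bound for the sum of geometrics $\sum_{i=1}^{n-1} G_i$ with $\mathbb{E}[\sum G_i]=(n-1)H_{n-1}\approx n\ln n$ is the right tool; a Janson-type tail bound with $p_{\min}=2/n$ gives $\Pr(\sum G_i\ge 4n\ln n)\le n^{-3}$, which delivers the explicit constant $4$ cleanly and makes your two-phase doubling/coupon-collector decomposition unnecessary. The net trade-off: the paper's citation is the economical choice, while your argument buys a verifiable constant and independence from the external reference, at the cost of having to trust (or re-verify) the protocol internals you assumed.
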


\begin{lemma}[Corollary 3.5 in \cite{BCCDNSX21}]\label{ssrk:reset}
Starting from any configuration, an execution reaches a configuration in which no agent has \( \Role = \Resetting \) within \( O(n) \) time with high probability.
\end{lemma}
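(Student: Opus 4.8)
Since the statement is quoted from~\cite{BCCDNSX21}, the pragmatic option is to cite their Corollary~3.5; here I sketch how I would prove it directly from \textsc{Propagate-Reset} and Lemma~\ref{ssrk:prop}.

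First I would reduce to the only nontrivial case. If $C_0$ already contains no agent with $\Role=\Resetting$, the execution has reached the target at time $0$; so assume $C_0$ contains at least one agent with $\Role=\Resetting$, and the goal becomes to reach a configuration with no $\Resetting$ agent within $O(n)$ time w.h.p. The difficulty is that $\Resetting$ agents can appear \emph{after} $C_0$: by epidemic spread of an existing $\Resetting$ agent, or by an agent freshly detecting an inconsistency and re-arming $\resetcount \gets R_{max}$. Hence no monotone ``number of $\Resetting$ agents'' potential is available, and one needs a more structural argument.

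The plan is a two-phase argument on the population's $\Resetting$-content. In Phase~1 I would show that within $O(\log n)$ time w.h.p.\ the execution reaches a configuration that is \emph{either} $\Resetting$-free (and we are done) \emph{or} all-$\Resetting$. The dichotomy is governed by the counters of the $\Resetting$ agents present: if some $\Resetting$ agent carries a large enough counter, then the $\Resetting$ role floods the whole population within $\Theta(\log n)$ time by the epidemic bound --- before that counter can expire, which is exactly why $R_{max}=60\log n$ is chosen with a large constant, and is also what Lemma~\ref{ssrk:prop} delivers when the trigger is a fresh detection; otherwise every $\Resetting$ agent present carries only a small counter, drains it within $O(\log n)$ time, and vanishes. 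In Phase~2 I would start from an all-$\Resetting$ configuration and observe that \emph{no} agent can detect a ranking-layer inconsistency there (the ranking variables are dormant while $\Role=\Resetting$), so no counter is ever re-armed; consequently the maximum $\resetcount$ over $\Resetting$ agents is non-increasing, and a routine phase-clock / Chernoff bound shows that every counter expires and every agent leaves the $\Resetting$ role within a further $O(\mathrm{polylog}\, n)$ time w.h.p. Combining the two phases yields the claim, with $O(n)$ a comfortably generous bound.

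I expect the main obstacle to lie in Phase~1, and specifically in the ``mixed'' regime where $\Resetting$ and non-$\Resetting$ agents coexist: one has to rule out that a sequence of fresh detections among the non-$\Resetting$ agents keeps the population in this mixed state for longer than $O(n)$ time. The key point is that each such detection only ever \emph{adds} $\Resetting$ agents and re-arms counters to $R_{max}$ (never above $R_{max}$), so it can only hasten the flood to the all-$\Resetting$ configuration of Phase~1 --- it cannot forestall it. The two facts the whole argument rests on are Lemma~\ref{ssrk:prop} and the stability of an all-$\Resetting$ configuration against further detection; both are guaranteed by the transition rules of \textsc{Propagate-Reset}, which is why invoking~\cite{BCCDNSX21} directly is the clean choice in the final write-up.
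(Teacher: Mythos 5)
The paper offers no proof of this lemma at all: it is imported verbatim as Corollary~3.5 of~\cite{BCCDNSX21} (``We have following lemmas from~\cite{BCCDNSX21}''), so your stated pragmatic option of simply citing that corollary is exactly what the paper does. Your supplementary two-phase sketch is a plausible reconstruction of the argument inside \textsc{Propagate-Reset}, but it relies on transition-rule details (max-propagation of $\resetcount$, dormancy of the ranking layer while $\Role=\Resetting$) that appear nowhere in this paper, so it can only be judged against~\cite{BCCDNSX21} itself, not against anything here.
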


\subsubsection{\textsc{Optimal-Silent-SSR}}
In this protocol, agents are assigned unique ranks in $[1,n]$ from any initial configuration.
The functions of \( \Role \) for agents are as follows: \( \Role = \Settled \) means the agent has been assigned a rank, \( \Role = \Unsettled \) means the agent has not been assigned a rank, and \( \Role = \Resetting \) means the agent is executing \textsc{Propagate-Reset}.
When some rank conflict or some inconsistency is detected, \textsc{Propagate-Reset} starts.  
At this point, as mentioned above, \( \Role \) is set to \( \Resetting \), \( \resetcount \) is set to \( R_{max} \), and additionally, \( \leader \) is set to \( L \).
Here, \( \leader \in \{L, F\} \) is a variable of an agent with \( \Role = \Resetting \) in this protocol, indicating whether the agent is a leader.
While resetting, agents elect a unique leader through a process where, when leaders meet, one of them becomes a follower (\(F\)).
After resetting, all agents are unranked, and a unique leader agent is elected.
The leader's \( \Role \) becomes \( \Settled \), with its \( \Rank \) set to \( 1 \).
Each agent with \( \Rank = i \) creates agents with \( \Rank = 2i \) and \( \Rank = 2i + 1 \) if the newly assigned \( \Rank \) does not exceed \( n \).
Eventually, all agents are assigned unique ranks.
Let $s_{rank}$ be the expected silence time of \textsc{Optimal-Silent-SSR}.
We have the following lemma from~\cite{BCCDNSX21}.

\sloppy
\begin{lemma}[Theorem 4.3 in \cite{BCCDNSX21}]\label{ssrk:thm}
\normalfont\textsc{Optimal-Silent-SSR}\itshape~is a silent self-stabilizing ranking protocol with $O(n)$ states and $s_{rank}=O(n)$ expected time.
\end{lemma}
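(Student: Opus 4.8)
Since the statement is Theorem~4.3 of \cite{BCCDNSX21}, the honest route in this paper is to cite it; below I sketch how I would reconstruct the argument. The plan is to split an arbitrary execution into the \textsc{Propagate-Reset} phase and the subsequent rank-growing phase, bound each in parallel time, and then verify closure and silence of the target configurations.

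\textbf{Legitimate configurations, closure, and silence.} I would call a configuration \emph{legitimate} if no agent has $\Role=\Resetting$, exactly one agent is a leader with $\Rank=1$ and $\Role=\Settled$, every value in $[1,n]$ is the $\Rank$ of exactly one $\Settled$ agent, the parent relation is consistent (the $\Rank=i$ agent's parent has $\Rank=\lfloor i/2\rfloor$), and no agent is $\Unsettled$. Two things need checking: (i) no transition enabled in a legitimate configuration changes any agent's state --- there is no rank conflict, no missing parent, and no $\Unsettled$ agent to recruit, so the detection predicates that trigger \textsc{Propagate-Reset} all evaluate to false and the child-creation rule is vacuous --- hence a legitimate configuration is silent and the outputs (the ranks) persist forever; and (ii) a legitimate configuration is safe for the ranking task, since the ranks $1,\dots,n$ are present and unique. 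The detection rules of \cite{BCCDNSX21} are tailored so that any non-legitimate configuration is either repaired locally or sets some agent to $\Resetting$.

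\textbf{Reachability and time.} Starting from an arbitrary configuration, Lemma~\ref{ssrk:reset} gives that the execution reaches a configuration with no $\Resetting$ agent within $O(n)$ time with high probability; by the design of \textsc{Propagate-Reset} that configuration has a single leader with $\Rank=1$ and all other agents $\Unsettled$. From there the tree grows: whenever a $\Settled$ agent of rank $i$ with $2i\le n$ and an unclaimed child rank interacts with an $\Unsettled$ agent, the latter becomes $\Settled$ with the child's rank, so the number of ranked agents increases monotonically to $n$. Because the recruiting frontier at the level currently being filled is comparable in size to the number of unranked agents remaining at that level, each successive recruitment costs only $O(1)$ expected parallel time per level, and the tree is completed in $O(\log n)$ additional expected time. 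Hence the whole process is $O(n)$ expected time, the $\Theta(n)$ bottleneck being \textsc{Propagate-Reset} --- concretely the coalescence of up to $n$ leaders among resetting agents, which costs $\sum_{k=2}^{n}\Theta(n/k^2)=\Theta(n)$ expected time. For the state count, since the state set is the union over roles it suffices to bound each: a $\Settled$ agent stores a rank in $[1,n]$ plus $O(1)$ flags ($\Theta(n)$ states), an $\Unsettled$ agent needs $O(1)$ states, and a $\Resetting$ agent needs $O(\log n)$ for $\resetcount$ and $O(1)$ for $\leader$; the total is $O(n)$.

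\textbf{Main obstacle.} I expect the time analysis, not correctness, to be the hard part. One must (a) upgrade the high-probability bound of Lemma~\ref{ssrk:reset} to an expectation, which requires showing that from every configuration the protocol either reaches a legitimate configuration or re-enters \textsc{Propagate-Reset}, so that independent retries of a constant-probability success accumulate to a finite expectation; and (b) show that the reset, leader-coalescence, and tree-construction sub-phases compose to $O(n)$ rather than, say, $O(n\log n)$ --- in particular that the coalescence sum $\sum 1/k^2$ and the level-by-level recruitment sums are genuinely $O(1)$ and $O(\log n)$ in parallel time, and that the phases do not interleave in a way that inflates the bound. Full details are in \cite{BCCDNSX21}.
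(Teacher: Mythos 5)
The paper offers no proof of this lemma: it is imported verbatim as Theorem~4.3 of \cite{BCCDNSX21}, and the surrounding text explicitly defers all proofs to that reference. Your decision to cite is therefore exactly what the paper does, and your reconstruction sketch is a reasonable account of the cited argument (legitimate configurations, closure and silence, reset followed by tree growth, and the per-role state count). One quibble with the sketch, should you ever need it in earnest: the claim that the rank-tree is completed in $O(\log n)$ additional expected parallel time is not right for the last level, where the number of open child slots and the number of remaining $\Unsettled$ agents shrink together, giving a coalescence-type sum $\sum_{j}\Theta(n/j^2)=\Theta(n)$ in parallel time; so tree construction is itself a $\Theta(n)$ bottleneck alongside leader coalescence, which is consistent with (and part of the reason for) the $\Omega(n)$ lower bound for silent protocols. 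This does not affect the stated $O(n)$ bound.
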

\fussy

\section{Impossibility}\label{sec:IMP}

In this section, we provide the impossibility.
We prove that no protocol can solve the self-stabilizing exact majority problem without knowledge of \( n \), using the same approach as the proof of the impossibility of a self-stabilizing bipartition protocol by Yasumi, Ooshita, Yamaguchi, and Inoue~\cite{YOYI19}.

\begin{theorem}
There is no protocol that solves the self-stabilizing exact majority without knowledge of $n$.
\end{theorem}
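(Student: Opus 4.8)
The plan is to assume, for contradiction, that a protocol $\mathcal{P}$ solves the self-stabilizing exact majority problem without knowledge of $n$, and then build an instance on which $\mathcal{P}$ must fail. The idea, following~\cite{YOYI19}, is to use a population on which $\mathcal{P}$ stabilizes correctly, then paste together two copies (or more) of a carefully chosen sub-population so that the combined population has the opposite majority (or a tie), yet $\mathcal{P}$ — being oblivious to $n$ — cannot tell the difference in a suitably chosen execution.

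Concretely, first I would fix a small population and two input assignments. Take $P_1$ with agent set $V_1$ and an input assignment where, say, $\Red$ is the strict majority; since $\mathcal{P}$ is self-stabilizing, every execution on $(V_1,\text{inputs})$ reaches a safe configuration $C_{\mathrm{safe}}^{(1)}$ in which all agents output $\Red$. Now form a doubled population $V_2 = V_1 \uplus V_1'$ (a disjoint copy), with the copied input assignment also copied; then in $V_2$ the counts of $\Red$ and $\Blue$ agents are exactly doubled, so the majority opinion is still $\Red$ — this alone is not yet a contradiction, so the real construction must flip the balance. The standard trick is instead to take $P_1$ with a strict $\Red$-majority and $P_1'$ with a strict $\Blue$-majority chosen so that, in the union, $|V_a| = |V_b|$ (a tie), or so that $\Blue$ becomes the majority; then drive the two halves through executions that, within each half, behave exactly as a legal stand-alone execution of $\mathcal{P}$ on that half (using only interactions internal to each half), reaching within each half the safe configuration where half the agents output $\Red$ and the other half output $\Blue$. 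Because $\mathcal{P}$ has no access to $n$, each half cannot distinguish "I am the whole population of size $|V_1|$" from "I am half of a population of size $2|V_1|$", so the product configuration is reached with positive probability in the big population; but there the outputs disagree with the true majority/tie of $V_2$, and by globally-fair reachability one shows no safe configuration with the correct unanimous output is ever forced, contradicting the stabilization requirement.

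The key steps in order: (i) pick the base population $P_1$ and a $\Red$-majority input, and a mirrored population $P_1'$ with a $\Blue$-majority input, with sizes arranged so the disjoint union has majority $\Blue$ (or a tie), and note that a correct protocol on the union must make all agents eventually output $\Blue$ (resp.\ $\tie$); (ii) observe that $\mathcal{P}$, lacking $n$, has the same transition function regardless of population size, so any execution of $\mathcal{P}$ restricted to $V_1$-internal interactions is also a valid partial execution inside the union; (iii) construct a deterministic scheduler on the union that first runs $P_1$ to its $\Red$-unanimous safe configuration using only $V_1$-internal interactions, then runs $P_1'$ to its $\Blue$-unanimous safe configuration using only $V_1'$-internal interactions, reaching a configuration $D$ of the union in which outputs are split; (iv) argue $D$ is reachable with positive probability under the uniform random scheduler, so by the globally-fair property $D$ (and everything reachable from it) recurs infinitely often, hence no safe configuration with the correct unanimous output can be reachable from the initial configuration — contradiction.

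The main obstacle is step (iii)–(iv): I must ensure the "split" configuration $D$ is genuinely reached by a legal execution of $\mathcal{P}$ on the union and is not itself safe in a way that already violates correctness only with probability $0$. The delicate point is that "safe" here is a statement about *all* executions from a configuration, while a self-stabilizing protocol need only reach a safe configuration with probability $1$; so I cannot simply say "$\mathcal{P}$ outputs the wrong thing in $D$ forever". Instead I need the stronger structural argument: since from the true initial configuration of the union every execution must reach a safe configuration, and every safe configuration reachable from it has all agents outputting the true majority of $V_2$, I derive a contradiction by exhibiting the reachable configuration $D$ from which a further reachable configuration still has some agent outputting the wrong value — using global fairness to convert "reachable" into "occurs infinitely often", which is incompatible with eventually being trapped in a safe configuration. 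Making the sizes of $P_1$ and $P_1'$ work out (so that doubling/combining actually flips or ties the majority — e.g.\ $|V_1|$ with $\Red$-count $r_1>b_1$ and a mirror with $\Blue$-count $r_1<b_1'$ such that $r_1+b_1' \lessgtr b_1+r_1'$) is a routine but necessary bookkeeping detail I would pin down carefully.
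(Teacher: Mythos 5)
Your construction through step (iii) is fine: because $\mathcal{P}$ has no knowledge of $n$ and transitions are pairwise, an execution of the union that uses only $V_1$-internal interactions is indistinguishable, on $V_1$, from a standalone execution, so the split configuration $D$ is indeed reachable in the union with positive probability. The genuine gap is in step (iv). ``$D$ is reachable with positive probability'' does not imply ``$D$ recurs infinitely often'': the globally-fair property used in this model says that a configuration that occurs \emph{infinitely often} has all its successors occur infinitely often; it does not promote one-time reachability to recurrence. More fundamentally, merely exhibiting a reachable configuration with wrong outputs proves nothing -- every correct self-stabilizing protocol passes through wrong-output configurations before stabilizing, and from $D$ the protocol may still converge to a correct safe configuration with probability $1$. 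Your own ``delicate point'' paragraph identifies the problem but the proposed repair (fairness converting reachable into infinitely often) is not valid.

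The missing idea, which is the heart of the paper's proof, is to exploit the definition of a \emph{safe} configuration rather than fairness. The paper takes a single population $V$ with $|V_a|>|V_b|$ and $|V_b|\ge 2$, drives it (by a deterministic scheduler) to a configuration $C_t$ that is \emph{safe} for $V$, in which all agents output $\Red$. It then observes that the minority agents $V_b$, viewed as a standalone population (legal, since $\mathcal{P}$ ignores $n$), must from \emph{any} initial configuration -- in particular from their states in $C_t$ -- reach a configuration where they all output $\Blue$; the scheduler realizing this uses only $V_b$-internal interactions and is therefore also a valid continuation of the execution on $V$. This changes outputs after a safe configuration, contradicting safety directly, with no probabilistic or fairness argument needed. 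You could salvage your union construction the same way: first drive the union to one of \emph{its} safe configurations (which exists by correctness and has unanimous correct output), and only then run one half internally to flip its outputs; but as written, running the two halves to their own safe configurations first and then appealing to fairness does not yield a contradiction.
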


\begin{proof}
For contradiction, we assume there is a protocol $\mathcal{P}$ that solves self-stabilizing exact majority without knowledge of $n$.
Consider a population $V$, where $|V|\ge 4$, $|V_a|\ge 2$, $|V_b|\ge 2$, and $|V_a|>|V_b|$.
From the assumption, $\mathcal{P}$ solves the problem for $V$ and $V_b$.
Let \( C_0 \) be any configuration of \( V \), and let \( \gamma=\gamma_0,\gamma_1,\dots \) be a deterministic scheduler on $V$ that makes \( C_0 \) eventually reach a safe configuration.
For an execution \( \Xi_{\mathcal{P}}(C_0, \gamma) = C_0, C_1, \dots \) on \( V \), there exists a safe configuration \( C_t \) in which all agents output \( \Red \).

Similarly, consider another population \( V' \) consisting of all agents in \( V_b \), and let \( C'_0 \) be a configuration of \( V' \) satisfying \( \forall v \in V' : C'_0(v) = C_t(v) \).  
Let \( \gamma' = \gamma'_0, \gamma'_1, \dots \) be a deterministic scheduler on \( V' \) that makes \( C'_0 \) eventually reach a safe configuration.
For an execution \( \Xi_{\mathcal{P}}(C'_0, \gamma') = C'_0, C'_1, \dots \) on \( V' \), there exists a safe configuration \( C'_{t'} \) in which all agents output \( \Blue \).

From these executions, let \( \gamma'' = \gamma''_0, \gamma''_1, \dots \) be a deterministic scheduler on \( V \) satisfying $\forall i\in[0,t-1]:\gamma''_i=\gamma_i \land \forall j\in [0,t'-1]:\gamma''_{t+j}=\gamma'_j$.
Then, let \( C''_0 = C_0 \), and consider the execution \( \Xi_\mathcal{P}(C''_0, \gamma'') \).  
In this execution, \( C''_t \) is a safe configuration in which all agents output \( \Red \).  
However, in \( C''_{t+t'} \), all agents that belong to \( V_b \) output \( \Blue \).  
This contradicts the assumption that \( C''_t \) is a safe configuration.
\end{proof}
Note that a deterministic scheduler can appear as a prefix of a uniform random scheduler with nonzero probability.  
Therefore, this theorem is also applicable to the impossibility result under a uniform random scheduler.

\section{Lower Bound and Upper Bound}\label{sec:Bound}
\subsection{Lower Bounds}

In this subsection, we provide the lower bounds.

\begin{lemma}\label{lb:silent}
Consider a population $V$ where $|V|\ge 4$, $|V_a|\le|V_b|$.
In a silent configuration of a silent self-stabilizing exact majority protocol, the states of agents with input \( \Red \) are all distinct from one another.
\end{lemma}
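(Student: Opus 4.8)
The plan is to argue by contradiction. Let \( \mathcal{P} \) be a silent self-stabilizing exact majority protocol and let \( C \) be a silent configuration on \( V \); suppose, toward a contradiction, that two distinct agents \( u, v \in V_a \) have the same state in \( C \). Since both have input \( \Red \), this means \( u \) and \( v \) share the same (state, input) pair, say \( (s, \Red) \). If \( |V_a| \le 1 \) the claim is vacuous, so we may assume \( |V_a| \ge 2 \); together with \( |V| \ge 4 \) and \( |V_a| \le |V_b| \), this gives \( |V_b| \ge 2 \).

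First I would pin down the outputs forced in \( C \). Since \( |V_a| \le |V_b| \), the exact majority opinion of \( V \) is \( \Blue \) or \( \tie \). Because \( C \) is silent and \( \mathcal{P} \) is self-stabilizing, taking \( C \) as an initial configuration the execution stays at \( C \) forever, so \( C \) is safe and every agent outputs the majority of \( V \); in particular \( \pi_{out}(s, \Red) \in \{\Blue, \tie\} \).

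Next I would build a second configuration \( C^\sharp \) on a population \( V^\sharp \) with the \emph{same} agent set as \( V \) --- so \( |V^\sharp| = n \) and \( \mathcal{P} \) must be correct on it as well --- chosen so that its majority is \( \Red \) while \( C^\sharp \) is still silent. Fix \( k = \lfloor (|V_b| - |V_a|)/2 \rfloor + 1 \); since \( |V_b| \ge 2 \) one checks \( 1 \le k \le |V_b| \). Pick \( k \) agents of \( V_b \); in \( C^\sharp \), give each of them input \( \Red \) and state \( s \), and leave every other agent with the same (state, input) pair as in \( C \). Then \( |V^\sharp_a| = |V_a| + k > |V_b| - k = |V^\sharp_b| \), so the majority of \( V^\sharp \) is \( \Red \). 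The crucial observation is that \( C^\sharp \) is silent: in any interaction in \( C^\sharp \), each of the two participants carries a (state, input) pair already present in \( C \) (the recolored agents carry exactly \( u \)'s pair \( (s, \Red) \), the others are unchanged), so \( \delta \) acts as the identity on their states just as it did in \( C \); note that this uses silence of \( C \) applied to the ordered pair \( (u, v) \) to handle an interaction between two recolored agents. Hence, taking \( C^\sharp \) as an initial configuration, \( \mathcal{P} \) stays at \( C^\sharp \) forever and every agent outputs \( \Red \). But \( u \) is not among the recolored agents, so in \( C^\sharp \) it still has (state, input) pair \( (s, \Red) \), forcing \( \pi_{out}(s, \Red) = \Red \) and contradicting \( \pi_{out}(s, \Red) \in \{\Blue, \tie\} \).

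The conceptual heart is the silence-preservation check, but it reduces to a short case analysis over the three types of interacting pairs in \( C^\sharp \) (both recolored, one recolored, neither recolored), each covered by an interaction already present and state-preserving in \( C \). The only genuinely fiddly part is the bookkeeping that enough \( \Blue \) agents are available to recolor, i.e. \( k \le |V_b| \), and that the recoloring indeed flips the majority to \( \Red \); I would settle these with the elementary inequalities above, being slightly careful about the parity of \( |V_b| - |V_a| \).
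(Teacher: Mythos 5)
Your proof is correct, and its overall strategy coincides with the paper's: from two agents of $V_a$ sharing a state $s$ in the silent configuration $C$, deduce via silence plus correctness that $\pi_{out}(s,\Red)\in\{\Blue,\tie\}$ and that $\delta((s,\Red),(s,\Red))$ fixes both states, then exhibit a silent configuration whose exact majority is $\Red$ but in which some agent is frozen in $(s,\Red)$. Where you differ is in the witness configuration. The paper takes the extreme case: a population of $n$ agents, all with input $\Red$ and all in state $s$; its silence needs only the single transition $\delta((s,\Red),(s,\Red))=(s,s)$, already forced by the silence of $C$ applied to the two coinciding agents, so no case analysis or counting is required. You instead perturb $C$ minimally, recoloring $k=\lfloor(|V_b|-|V_a|)/2\rfloor+1$ agents of $V_b$ into $(s,\Red)$; this obliges you to verify silence over three kinds of interacting pairs and to check $1\le k\le |V_b|$ and that the majority flips, all of which you do correctly (including the point that a recolored--unrecolored pair is realized in $C$ by $u$ together with the unrecolored agent, or by the pair $(u,v)$ when the unrecolored agent is $u$ or $v$ itself). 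Since both arguments ultimately only need $\pi_{out}(s,\Red)=\Red$ for a single agent, your more careful construction buys nothing extra here and the paper's monochromatic population is the cleaner route; nevertheless, your version is a complete and valid proof of the lemma.
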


\begin{proof}
We prove this lemma by contradiction. 
We assume that there exists a silent self-stabilizing exact majority protocol \( \mathcal{P} \) that uses \( |V_a| - 1 \) states for agents with input \( \Red \) in a silent configuration.
Suppose some execution reaches a silent configuration \( C \). From the assumption, there exist two agents with input \( \Red \) that are in the same state \( s_A \) in \( C \). Since \( C \) is a silent configuration, the state satisfies $\delta((s_A,\Red),(s_A,\Red))\to (s_A,s_A)$ and $\pi_{out}(s_A, \Red)\in \{\Blue,\tie\}$.
Next, consider another population with \( n \) agents, where all agents have \( \iinput = \Red \), and an execution starting from the configuration in which all agents' states are \( s_A \).
Such an execution cannot reach a safe configuration since all agents eternally output $\Blue$ or $\tie$, while the exact majority opinion is $\Red$.
This is a contradiction.
\end{proof}

\begin{theorem}\label{lb:states}
A silent self-stabilizing exact majority protocol requires $n$ states.
\end{theorem}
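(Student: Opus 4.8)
The plan is to lower bound $|Q|$ by classifying states according to their output behaviour and then forcing each class to be large by applying Lemma~\ref{lb:silent} to populations with several different input splits. Concretely, I would write $S_y = \{q \in Q : \pi_{out}(q,\Red)=y\}$ for $y\in\{\Red,\Blue,\tie\}$, so that $Q = S_{\Red}\sqcup S_{\Blue}\sqcup S_{\tie}$, and separately lower bound $|S_{\Blue}|$, $|S_{\tie}|$, and $|S_{\Red}|$ using the fact that a silent self-stabilizing protocol reaches a silent configuration from every initial configuration, and in such a configuration every agent must already output the correct majority value.

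For the bulk of the bound I would apply Lemma~\ref{lb:silent} to the population with $|V_a| = \lceil n/2\rceil - 1$ and $|V_b| = \lfloor n/2\rfloor + 1$ (so $|V_a|\le|V_b|$ and the majority is $\Blue$): in a silent configuration the $\lceil n/2\rceil - 1$ agents with input $\Red$ occupy pairwise distinct states, each of which must output $\Blue$ on input $\Red$, hence $|S_{\Blue}|\ge \lceil n/2\rceil - 1$. When $n$ is even I would then use the balanced split $|V_a|=|V_b|=n/2$, whose majority is $\tie$, and get in the same way $n/2$ distinct states lying in $S_{\tie}$, so $|S_{\tie}|\ge n/2$. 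Finally, the all-$\Red$ population reaches a silent configuration in which every agent outputs $\Red$, giving $S_{\Red}\neq\emptyset$. Since these three classes are disjoint, $|Q|\ge (n/2-1)+n/2+1 = n$ for even $n$ (and $n\ge 4$, the small cases being handled directly).

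For odd $n$ there is no balanced split, so the $S_{\tie}$ step is unavailable and the $\Red$-partition alone only yields roughly $n/2$ states; here I would additionally invoke the mirror image of Lemma~\ref{lb:silent} (distinctness of the input-$\Blue$ agents when $|V_b|\le|V_a|$) on the population with $|V_b|=(n-1)/2$, $|V_a|=(n+1)/2$, producing $(n-1)/2$ distinct states each outputting $\Red$ on input $\Blue$, and combine these with the $(n-1)/2$ states from the $\Blue$-heavy split and the two ``monochromatic'' states supplied by the all-$\Red$ and all-$\Blue$ populations, arguing that together they exhaust at least $n$ distinct states.

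I expect this odd-$n$ case to be the main obstacle, and within it the delicate point is a disjointness claim: one must rule out a single state $s$ serving simultaneously as the state of an input-$\Red$ agent in the $\Blue$-majority silent configuration and as that of an input-$\Blue$ agent in the $\Red$-majority silent configuration. The tempting approach — relabelling the inputs of one configuration so as to merge the two and derive a safety violation à la Lemma~\ref{lb:silent} — runs into the fact that a silent configuration only pins down $\delta((s,x_1),(s,x_2))$ for the specific inputs $x_1,x_2$ carried by the two agents that share state $s$, and leaves the transition between two agents in state $s$ with the \emph{new} input pair undetermined; closing this gap, or choosing splits that avoid it entirely, is the crux of the argument.
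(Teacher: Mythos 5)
Your even-$n$ argument is essentially the paper's proof verbatim: three populations with $|V_a| \in \{n/2-1,\ n/2,\ n/2+1\}$ yield, via Lemma~\ref{lb:silent} and the disjointness of the output classes under $\pi_{out}(\cdot,\Red)$, respectively $n/2-1$, $n/2$, and $1$ pairwise-distinct states (the paper uses the mixed majority-$\Red$ population rather than the all-$\Red$ one for the last class, an immaterial difference). The odd-$n$ disjointness obstruction you flag is a genuine difficulty, but the paper does not attempt that case at all --- its proof opens with ``Let $n$ be an even number no less than 4'' --- so your proposal matches the paper on everything the paper actually proves.
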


\begin{proof}
Let \( n \) be an even number no less than 4.  
Consider a population \( V \) with \( n \) agents, where \( |V_a| = n/2 - 1 \).  
From Lemma~\ref{lb:silent}, there exist \( n/2 - 1 \) distinct states of agents with \( \iinput = \Red \) that output \( \Blue \).
Next, consider another population \( V' \) with \( n \) agents, where \( |V'_a| = n/2 \).  
From Lemma~\ref{lb:silent}, there exist \( n/2 \) distinct states of agents with \( \iinput = \Red \) that output \( \tie \).  
Finally, consider a population \( V'' \) with \( n \) agents, where \( |V''_a| = n/2 + 1 \).  
It is clear that there exists at least one state of agents with \( \iinput = \Red \) such that they output \( \Red \).  
Thus, the number of states is at least $(n/2 - 1) + (n/2) + 1 = n$.
\end{proof}

We prove a lower bound for the stabilization time of a silent self-stabilizing exact majority protocol in the same way as the proof of the lower bound for the stabilization time of a silent self-stabilizing leader election protocol by Burman et al.~\cite{BCCDNSX21}.

\begin{theorem}\label{lb:time}
Any silent self-stabilizing exact majority protocol requires $\Omega(n)$ time in expectation and $\Omega(n\log{n})$ time with high probability to reach a safe configuration.
\end{theorem}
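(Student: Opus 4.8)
The plan is to establish two separate lower bounds — an $\Omega(n)$ bound in expectation and an $\Omega(n\log n)$ bound with high probability — both resting on the observation from Lemma~\ref{lb:silent} that, in any silent configuration, the agents with input $\Red$ occupy $|V_a|$ pairwise-distinct states (and symmetrically for $\Blue$). The key structural fact is that a silent self-stabilizing exact majority protocol, once it reaches a silent configuration, must have "spread out" the $\Red$-agents across many distinct states; reaching such a highly differentiated configuration from an arbitrary start cannot happen quickly.

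For the $\Omega(n)$ expected-time bound, I would fix $n$ even and consider a population with $|V_a| = n/2$, so that by Lemma~\ref{lb:silent} every silent configuration assigns the $n/2$ $\Red$-agents $n/2$ distinct states. Start from a configuration in which all $\Red$-agents share a single state $s$. The number of distinct states currently held by $\Red$-agents can increase by at most one per interaction (only the two interacting agents change state), so at least $n/2 - 1$ interactions involving two $\Red$-agents — or more carefully, $\Omega(n)$ interactions total — are needed before a silent configuration can be reached. Since each of the first $\Theta(n)$ interactions is an independent uniform draw, reaching a silent configuration requires $\Omega(n)$ interactions, i.e.\ $\Omega(1)$ parallel time per unit of progress; summing gives $\Omega(n)$ interactions and hence $\Omega(1)$ — wait, that only gives constant parallel time. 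The correct argument (following Burman et al.) is subtler: one shows that from this all-$\Red$-agents-identical start, reaching a configuration where the $\Red$-agents have $n/2$ distinct states requires, by a counting/coupon-collector-style argument over which pairs of agents have interacted, $\Omega(n\log n)$ interactions in expectation is too strong; the expected bound comes from the fact that to differentiate $\Omega(n)$ agents each must have participated in an interaction that changed it, and the expected time for the last of a linear number of "necessary" interactions, each occurring at rate depending on how many agents still need differentiating, telescopes to $\Omega(n)$ parallel time.

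More precisely — and this is the main obstacle, so I will be careful — the argument is: consider the potential $\Phi(C) = n - (\text{number of distinct states among } \Red\text{-agents})$. In a silent configuration $\Phi \le n/2$. Starting from $\Phi = n - 1$, each interaction decreases $\Phi$ by at most $1$, and decreases it only if it involves at least one $\Red$-agent whose state will become "new." A cleaner route: show that each $\Red$-agent, except possibly a constant number, must participate in at least one interaction before the configuration is silent (an agent that never interacts keeps its initial state $s$, so if two such agents exist we contradict distinctness). Then the time until $n/2 - 1$ specific agents have each interacted at least once is, by the coupon-collector lower bound, $\Omega(\log n)$ parallel time with probability $1 - O(1/n)$ — wait, coupon collector on $\Theta(n)$ specific coupons out of $n$ gives $\Omega(n\log n)$ interactions $= \Omega(\log n)$ parallel time. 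That yields the w.h.p.\ $\Omega(n\log n)$ interactions bound but expressed in parallel time it is $\Omega(\log n)$, not $\Omega(n\log n)$; so the stronger parallel-time bounds $\Omega(n)$ expected and $\Omega(n\log n)$ w.h.p.\ must come from the extra requirement that the $\Red$-agents reach $n/2$ \emph{distinct} states, which is a much stronger condition than merely "everyone has moved once" — I would argue that differentiating into $k$ distinct states requires a sequence of interactions along a "broadcast tree" of depth $\Omega(\log k)$ repeated $\Omega(k)$ times, or invoke the same machinery Burman et al.\ use for silent leader election (which needs $\Omega(n)$ distinct leader-related states and gets exactly these bounds), transferring it verbatim by treating the $\Red$-agents as the population that must be "ranked/differentiated."

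Thus the proof skeleton I would write is: (1) reduce, via Lemma~\ref{lb:silent}, to the statement that the $\Red$-subpopulation must reach $\Omega(n)$ distinct states from an initial configuration where they are undifferentiated; (2) invoke the lower-bound argument of Burman et al.~\cite{BCCDNSX21} for silent self-stabilizing leader election / ranking, which shows that reaching any configuration with $\Omega(n)$ distinct states among $\Omega(n)$ agents, starting from an undifferentiated one, takes $\Omega(n)$ expected parallel time and $\Omega(n\log n)$ parallel time with probability $1 - O(1/n)$; (3) observe the reduction is faithful because our construction fixes $|V_a| = \Theta(n)$ and places no constraint on $V_b$, so the adversarial initial configuration is admissible. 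The main obstacle is step (2): I must verify that Burman et al.'s argument only uses the "must become silent with $\Omega(n)$ distinct states" property and not any leader-election-specific structure, so that it applies unchanged to the majority setting; I expect this to go through because their argument is essentially information-theoretic about how fast a population can differentiate.
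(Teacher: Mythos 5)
Your proposal does not follow the paper's argument, and as written it contains genuine gaps that prevent it from establishing the theorem.

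The most important gap is a mismatch between what you are bounding and what the theorem claims. Your reduction in step (1) targets the time for the $\Red$-subpopulation to reach $\Omega(n)$ \emph{distinct states}, i.e.\ the time to reach a \emph{silent} configuration. But the theorem is a lower bound on the time to reach a \emph{safe} configuration, and a safe configuration (outputs frozen) can be reached long before a silent one (states frozen); the paper's own footnote notes that stabilization time is at most silence time. A lower bound on silence time therefore does not imply the stated lower bound on stabilization time. Your argument, even if completed, would prove the wrong (weaker) statement.

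The second gap is step (2) itself. You defer the entire quantitative content to an alleged ``information-theoretic'' result of Burman et al.\ saying that differentiating $\Omega(n)$ undifferentiated agents into $\Omega(n)$ distinct states requires $\Omega(n)$ expected parallel time and $\Omega(n\log n)$ parallel time w.h.p. No such general differentiation lower bound is what Burman et al.\ prove, and your own three attempts at deriving it in the body of the proposal each stall at $\Omega(\log n)$ parallel time (coupon collector) or are abandoned mid-argument. The actual argument — both in Burman et al.\ and in this paper — is an indistinguishability/perturbation argument, which you did not attempt: take a population with $|V_a|=\lfloor n/2\rfloor$, $|V_b|=\lceil n/2\rceil$ and a silent configuration $C$ (majority $\Blue$); by Lemma~\ref{lb:silent} the $\Red$-agents' states are distinct, so pick a $\Red$-agent $w$ with state $s_A$ and build a new initial configuration $C'$ identical to $C$ except that one $\Blue$-input agent $u$ is replaced by an agent with input $\Red$ and state $s_A$. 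Now the true majority is $\Red$ but every agent still outputs $\Blue$, so $C'$ is not safe; and since every state--input pair occurring in $C'$ except the duplicated $(s_A,\Red)$ already occurs in the silent configuration $C$, no interaction changes any state until $u$ and $w$ meet each other. That single required meeting happens with probability $\frac{2}{n(n-1)}$ per step, giving $\Omega(n^2)$ interactions in expectation and $\Omega(n^2\log n)$ interactions with probability $1-O(1/n)$, i.e.\ $\Omega(n)$ and $\Omega(n\log n)$ parallel time. This directly lower-bounds the time to reach a safe configuration, which is what the theorem requires.
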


\begin{proof}
Consider a population $V$ where $|V|\ge 5 \land |V|\bmod 2=1$, $|V_a|= \lfloor n/2\rfloor$, and $|V_b|= \lceil n/2\rceil$.
From Lemma~\ref{lb:silent}, in a silent configuration $C$, the states of agents whose input are $\Red$ are distinct from one another.
Let \( s_A \) be the state of an agent \( w \) whose input is \( \Red \) in $C$.

For an agent $u\in V$ whose input is $\Blue$, we consider another population $V'=V$ and a configuration $C'$ satisfying $\forall v\in V'\setminus\{u\}:C'(v)=C(v)$ and $C'(u)=(s_A,\Red)$.
In $C'$, all agents output $\Blue$, and the exact majority opinion of $V'$ is $\Red$.
Thus, $C'$ is not safe.
Since $C$ is a silent configuration, in $C'$, unless $u$ and $w$ interact with each other, their states will not be changed.
The probability that $u$ and $w$ interact is $\frac{2}{n(n-1)}$.
Thus, the expected number of interactions until the interaction occurs is $n(n-1)/2=\Omega(n^2)$.
Therefore, divided by $n$, the expected time is $\Omega(n)$.

The probability that \( u \) and \( w \) do not interact in $\alpha \frac{n(n-1)}{2}\log_{e}{n}-1$ interactions is\\ $\left(1-\frac{2}{n(n-1)}\right)^{\alpha\frac{n(n-1)}{2}\log_{e}{n}-1}>e^{-\alpha\log_{e}{n}}=n^{-\alpha}$ for some constant value $\alpha > 0$.
Thus, the number of interactions until $u$ and $w$ interact with each other is $\frac{n(n-1)}{2}\log_{e}{n}-1=\Omega(n^2\log{n})$ with probability $1-O(1/n)$ ($\alpha=1$). 
Therefore, divided by $n$, the time with high probability is $\Omega(n\log{n})$.
\end{proof}

\subsection{Matching Upper Bound}\label{subsec:em}

In this subsection, we propose a silent self-stabilizing exact majority protocol, $\Pem$, using $O(n)$ states and stabilizing within $O(n)$ time in expectation and within $O(n\log{n})$ time with high probability, with knowledge of $n$.
In \( \Pem \), each agent has an input and a state.  
The input of an agent \( a \) is \( a.\iinput \in \{\Red, \Blue\} \) (read-only).  
The state of each agent is represented by separate variables.
Since we use the silent self-stabilizing ranking protocol \textsc{Optimal-Silent-SSR}~\cite{BCCDNSX21}, each agent has the variables used in \textsc{Optimal-Silent-SSR}.  
Additionally, each agent \( a \) has two other variables: \( a.\ans \in \{\phi, \tie, \Red, \Blue\} \) and \( a.\timer \in [0, 7(t_{rank} + 4)] \).
Here, \( t_{rank} \) is a constant satisfying \( s_{rank} \leq t_{rank} \cdot n \) and \( t_{rank} = O(1) \), where \( t_{rank} \cdot n \) denotes a sufficient amount of time for \textsc{Optimal-Silent-SSR} to stabilize.
The number of states of \( \Pem \) is \( O(n) \) since the number of states of \( \ans \) and \( \timer \) is \( O(1) \), and the number of states of the variables used in \textsc{Optimal-Silent-SSR} is \( O(n) \).
Each agent $a$ outputs $\tie$ if $a.\ans = \phi$; otherwise, it outputs $a.\ans$.

\begin{algorithm}[!htb]
\caption{Silent Self-Stabilizing Exact Majority Protocol $\Pem$}
\label{protocol:SSEMN}
\SetKwInOut{VariablesAgent}{Agents' variables}
\SetKwInOut{OutFunc}{Output function $\pi_{out}$}
\setcounter{AlgoLine}{0}
\nonl\when{an agent $a_0$ interacts with an agent $a_1$}{
    Execute \textsc{Optimal-Silent-SSR}~\cite{BCCDNSX21}.\;
    \For{$i\in \{0,1\}$}{
        \If{$a_i.\Role \textnormal{ becomes } \Resetting \textnormal{ in this interaction}$}{
            $a_i.\ans\gets \phi$\;
        }\If{$a_i.\Role \textnormal{ becomes } \Settled \textnormal{ in this interaction} \land a_i.\Rank = \lceil n/2\rceil$}{
            $a_i.\timer\gets 7(t_{rank}+4)$\;
        }
    }
    \If{$\exists i\in\{0,1\}:a_0.\mathtt{role}=a_1.\mathtt{role}=\mathtt{Resetting}\wedge a_i.\ans=\phi \wedge a_{1-i}.\ans\ne \phi$}{
        $a_i.\ans\gets a_{1-i}.\ans$\;
    }
    \If{$a_0.\mathtt{role}=a_1.\mathtt{role}=\mathtt{Settled}$}{
        \If{$a_0.\Rank < a_1.\Rank \land a_0.\iinput = \Blue \land a_1.\iinput = \Red$}{
            Swap the states between $a_0$ and $a_1$.
        }
        \uIf{$n\bmod 2 = 0 \land \exists i\in\{0,1\}:a_i.\Rank = a_{1-i}.\Rank -1 = n/2$}{
            \uIf{$a_i.\iinput = a_{1-i}.\iinput$}{
                $(a_i.\ans, a_{1-i}.\ans)\gets (a_i.\iinput, a_i.\iinput)$\;
            }\Else{
                $(a_i.\ans, a_{1-i}.\ans)\gets (\tie, \tie)$\;
            }
        }\ElseIf{$n\bmod 2=1 \land \exists i\in\{0,1\}:a_i.\Rank=\lceil n/2\rceil$}{
            $a_i.\ans\gets a_i.\iinput$\;
        }
        \If{$\exists i\in\{0,1\}:a_i.\Rank=\lceil n/2\rceil$}{
            \lIf{$a_{1-i}.\Rank=n$}{
                $a_i.\timer\gets \max(0,a_i.\timer-1)$
            }
            \If{$a_i.\timer=0 \wedge a_i.\ans\ne a_{1-i}.\ans$}{
                
                $a_{1-i}.\ans\gets a_i.\ans$\;
                \For{$j\in\{0,1\}$}{
                    $(a_j.\mathtt{role},a_j.\leader,a_j.\mathtt{resetcount})\gets(\mathtt{Resetting},L,R_{max})$\;
                }
            }
        }
    }
}
\end{algorithm}

Our protocol $\Pem$, described in Algorithm~\ref{protocol:SSEMN}, can be summarized as follows:
\begin{itemize}
    \item \textbf{Ranking}. Execute \textsc{Optimal-Silent-SSR} (line 1).
    \item \textbf{Swapping}. Swap the values of all variables between interacting agents, except for $\iinput$, if the $\Rank$ of an agent with input $\Blue$ is less than the $\Rank$ of an agent with input $\Red$ (lines 10--11).
    This eventually places agents with input $\Red$ in the lower ranks and agents with input $\Blue$ in the higher ranks.
    Note that even if the states of pairs of agents are swapped, this does not affect the correctness or the silence time of the ranking protocol, due to the symmetry among agents.
    \item \textbf{Decision}. If $n$ is even, the agent with $\Rank = n/2$ decides the exact majority opinion based on the inputs of both itself and the agent with $\Rank = n/2 + 1$ (lines 12--16). Otherwise, the agent with $\Rank = \lceil n/2 \rceil$ decides the exact majority opinion based on its own input (lines 17--18).
    \item \textbf{Propagation}. 
    After completing the above steps, the agent with $\Rank = \lceil n/2\rceil$ checks the opinions of the other agents. If there is an agent with a different opinion, \textsc{Propagate-Reset} is triggered, and the correct opinion is propagated during resetting. (lines 2--8 and 19--24).
\end{itemize}
In what follows, we explain the details of the Decision part and the Propagation part.

In the Decision part, agents decide the exact majority opinion.
We consider the situation where the Ranking part and the Swapping part are completed.
If $n$ is even, the exact majority opinion is determined by the agents with \( \Rank = n/2 \) and \( \Rank = n/2+ 1 \).
If the inputs of both agents are the same, then the agents with that input are in the majority, making that input the exact majority opinion.  
If not, the number of agents with input \( \Red \) and input \( \Blue \) is the same, so the exact majority opinion is \( \tie \).
If $n$ is odd, the exact majority opinion is the input of the agent with $\Rank=\lceil n/2\rceil$ since the number of agents with that input constitutes a majority.

In the Propagation part, the agent with $\Rank=\lceil n/2\rceil$ propagates its $\ans$ to all agents.
However, if the agent with \( \Rank = \lceil n/2 \rceil \) shares its \( \ans \) to every agent by a direct interaction, it would take \( \Theta(n\log{n}) \) time, since it is equivalent to the coupon collector's problem.
To achieve propagation within \( O(n) \) time in expectation, we use the following method.
The agent with $\Rank = \lceil n/2\rceil$ checks the opinions of the other agents (line 22).
It detects that ranking, swapping, and decision are completed when its \( \timer \) reaches 0.
To detect this, the agent with \( \Rank = \lceil n/2 \rceil \) sets its \( \timer \) to \( 7(t_{rank} + 4) \) when it is ranked (lines 5--6) and decreases it by 1 when it interacts with the agent with \( \Rank = n \) (line 20).
If there is an agent with a different opinion, it can be detected within \( O(n) \) time in expectation and triggers \textsc{Propagate-Reset} by setting its \( \Role \) to \( \Resetting \) (lines 21--24).  
Note that the variables \( \leader \) and \( \resetcount \) in the pseudocode are used in the ranking protocol (see Section~\ref{subsec:ranking}).
During \textsc{Propagate-Reset}, that is, while the \( \Role \) of agents is \( \Resetting \), the correct opinion is propagated to all agents.
Specifically, any agent with \( \Role = \Resetting \) other than the agent with \( \Rank = \lceil n/2 \rceil \) sets its \( \ans \) to \( \phi \) when its \( \Role \) becomes \( \Resetting \) (lines 3--4).
Then, the exact majority opinion overwrites \( \phi \) via the epidemic protocol\footnote{
In the epidemic protocol, the exact majority opinion is treated as 1, and \( \phi \) is treated as 0.
} (lines 7--8).
After the succeeding ranking, swapping, and decision, the agent with \( \Rank = \lceil n/2 \rceil \) again decides the correct opinion (lines 12--18).
At this time, all agents have the correct opinion; thus, \textsc{Propagate-Reset} will not occur. 
These process takes $O(n)$ time in expectation.

To summarize the above, $\Pem$ is as shown in Figure~\ref{fig:flow}.
\begin{figure}
    \centering
    \includegraphics[width=1.0\linewidth]{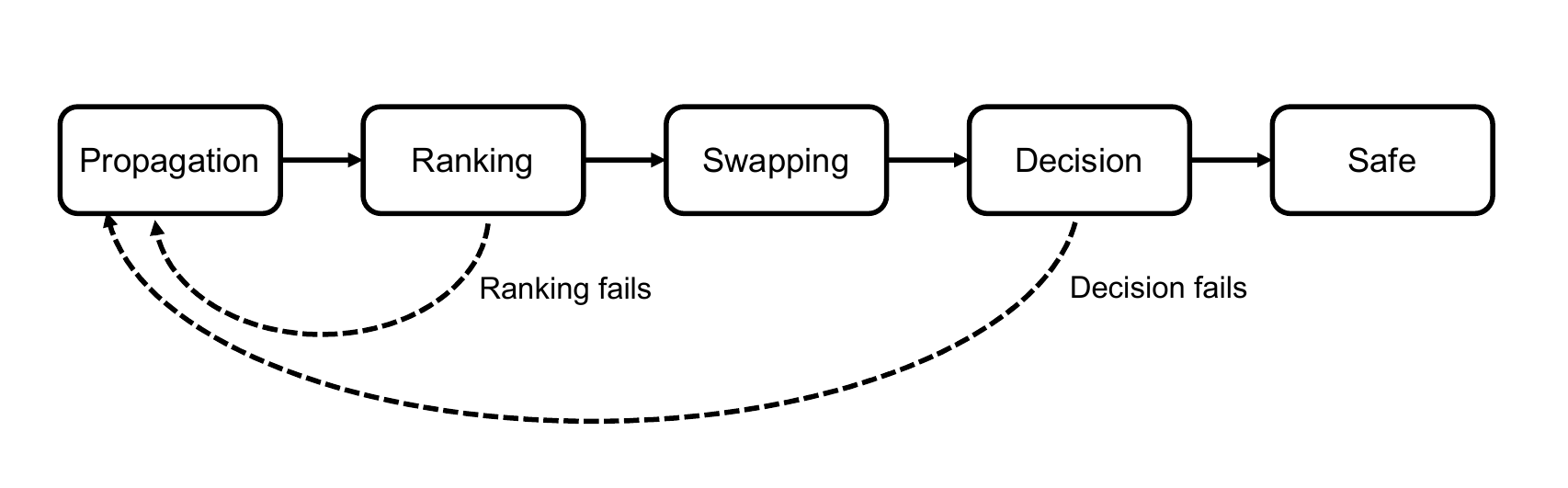}
    \caption{Flow of $\Pem$}
    \label{fig:flow}
\end{figure}
First, agents execute ranking, swapping, and decision.
After that, if $\ans$ of some agent is incorrect, the output of the agent with $\Rank = \lceil n/2 \rceil$ is propagated to all agents by restarting the ranking.
After re-ranking, re-swapping, and re-decision, the outputs of all agents are identical and correct.
Note that these transitions in the figure are probabilistic. 
Thus, if ranking or decision fails, agents also restart the protocol from the beginning by resetting via \textsc{Propagate-Reset}.

\subsubsection{Correctness and Analysis}

\newcommand{\Srank}{\mathcal{S}_{rank}}
\newcommand{\Sswap}{\mathcal{S}_{swap}}
\newcommand{\Tswap}{\mathcal{T}_{swap}}
\newcommand{\NOmid}{\mathcal{NO}_{mid}}
\newcommand{\Sout}{\mathcal{S}_{out}}
\newcommand{\Sdec}{\mathcal{S}_{dec}}
\newcommand{\Stim}{\mathcal{S}_{tim}}
\newcommand{\Sem}{\mathcal{S}_{em}}

In this part, we prove that an execution of \( \Pem \) under a uniform random scheduler reaches a silent configuration within \( O(n) \) time in expectation and $O(n\log{n})$ time with high probability.
To prove it, we define the sets of configurations as follows:
\begin{align*}
\Srank&=\{C\in \Call(\Pem)\mid \forall v\in V:C(v).\Role=\Settled \land \forall u,\forall v\in V:u\ne v\Rightarrow C(u).\Rank\ne C(v).\Rank\},\\
\Sswap&=\{C\in \Srank\mid \forall u,\forall v\in V:C(u).\iinput=\Red \land C(v).\iinput=\Blue \Rightarrow C(u).\Rank<C(v).\Rank\},\\ 
\Tswap&=\{C\in \Srank\mid \exists v\in V:C(v).\Rank=\lceil n/2\rceil\land C(v).\timer\ge 28\},\\
\Sdec&=\{C\in \Sswap \mid \forall v\in V:C(v).\Rank=\lceil n/2\rceil \Rightarrow C(v).\ans \text{ is} \text{ the} \text{ exact} \text{ majority} \text{ opinion}\},\\
\Sout&=\{C\in \Srank\mid \forall v\in V:C(v).\ans \textnormal{ is the exact majority opinion}\},\\
\Stim&=\Sswap \cap \Sout, \textnormal{ and}\\
\Sem&=\{C\in \Stim\mid \forall v\in V:C(v).\Rank=\lceil n/2\rceil \Rightarrow C(v).\timer=0\}.
\end{align*}

\begin{lemma}\label{em:silent}
$\Sem$ is a set of silent configurations.
\end{lemma}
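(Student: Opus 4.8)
The plan is to show that from any configuration $C \in \Sem$, no variable of any agent ever changes, by checking every line of Algorithm~\ref{protocol:SSEMN} against the defining properties of $\Sem$. Recall that $C \in \Sem$ means: every agent has $\Role = \Settled$; all ranks are distinct (hence they are exactly $\{1,\dots,n\}$); agents with input $\Red$ occupy strictly lower ranks than agents with input $\Blue$ (the $\Sswap$ condition); every agent's $\ans$ equals the exact majority opinion (the $\Sout$ condition); and the agent of rank $\lceil n/2\rceil$ has $\timer = 0$. I would fix an arbitrary interaction between $a_0$ and $a_1$ in such a configuration and walk through the pseudocode.

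First, since all agents are $\Settled$ with distinct ranks, \textsc{Optimal-Silent-SSR} (line~1) is already in a silent configuration of the ranking protocol, so line~1 changes nothing — here I invoke Lemma~\ref{ssrk:thm} together with the remark in the Swapping bullet that the added state-swaps do not disturb the ranking protocol's silent configurations. Consequently no $\Role$ ever becomes $\Resetting$ or $\Settled$ "in this interaction", so lines~3--8 are never triggered (their guards are false), and $\ans$ is untouched by that block. For the $\Settled$--$\Settled$ block (lines 9--24): the swap on lines~10--11 requires a $\Blue$-agent with rank below a $\Red$-agent, which the $\Sswap$ condition forbids, so no swap happens. On lines~12--18 the decision code may fire (if one of $a_0,a_1$ has the critical rank), but by the Decision-part argument already given in the text, the value it writes into $\ans$ is exactly the exact majority opinion, which by the $\Sout$ condition is already the current value — so this is an idempotent write and the configuration does not change. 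Finally, on lines~19--24: the $\timer$ decrement on line~20 does nothing because the rank-$\lceil n/2\rceil$ agent already has $\timer = 0$ and $\max(0,-1)=0$; and the reset on lines~21--24 has guard $a_i.\timer = 0 \wedge a_i.\ans \ne a_{1-i}.\ans$, but by the $\Sout$ condition all agents' $\ans$ values are equal to the exact majority opinion, so $a_i.\ans = a_{1-i}.\ans$ and the guard fails. Hence no variable of any agent changes under this interaction, and since the interaction and configuration were arbitrary, $C$ is silent.

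The main obstacle is not any single step but the bookkeeping: I must be certain I have enumerated every line that can write to a variable and confirmed each is either unreachable or idempotent in $\Sem$. The two subtle points are (i) justifying that line~1 is genuinely inert — this rests on $C$ restricting to a silent configuration of \textsc{Optimal-Silent-SSR}, which needs the observation that "all $\Settled$ with distinct ranks" is exactly (the relevant part of) such a silent configuration, plus the symmetry remark about swaps; and (ii) handling the decision block on lines~12--18, where the code does execute an assignment but writes back a value that the $\Sdec$/$\Sout$ conditions guarantee is already present. I would state these two points explicitly and leave the remaining guard-falsification checks as a short routine case analysis.
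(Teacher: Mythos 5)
Your proof is correct and takes essentially the same approach as the paper: the paper's own proof is a two-sentence assertion that in $\Sem$ ranking, swapping, and decision are completed, the $\timer$ of the rank-$\lceil n/2\rceil$ agent is $0$, and all agents' $\ans$ agree, hence no state changes. Your line-by-line check of Algorithm~\ref{protocol:SSEMN} (including the observation that the decision block performs only idempotent writes) simply makes explicit the details the paper leaves implicit.
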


\begin{proof}
In a configuration \( C_0 \in \Sem \), ranking, swapping, and decision are completed, the $\timer$ of the agent with $\Rank=\lceil n/2\rceil$ is $0$, and all agents' \( \ans \) are the same.
Thus, throughout a execution from $C_0$, no agent changes its state.
\end{proof}

\begin{table}[!htb]
    \caption{Stabilization Steps of $\Pem$}
    \centering
    \begin{tabular}{c@{\hspace{15pt}}c@{\hspace{15pt}}c@{\hspace{15pt}}c}
        \hline
        Step & Time & Success Probability & Lemmas\\
        \hline
        $\Call(\Pem) \rightarrow \Srank$ & $O(n)$ & $1/10$ & Lemma~\ref{em:rank}\\
        $\Srank \rightarrow \Tswap\cup\Stim$ & $O(n)$ & $1/20$ & Lemma~\ref{em:tswap}\\
        $\Tswap \rightarrow \Sdec$ & $O(n)$ & $1/8$ & Lemma~\ref{em:dec}\\
        $\Sdec \rightarrow \Stim$ & $O(n)$ & $1/1280$ & Lemma~\ref{em:tim}\\
        $\Stim \rightarrow \Sem$ & $O(n)$ & $1/2$ & Lemma~\ref{em:safe}\\
        \hline
        $\Call(\Pem) \rightarrow \Sem$ & $O(n)$ & $1/4096000$ & Lemma~\ref{em:stab}\\
        \hline
    \end{tabular}
    \label{tab:steps}
\end{table}

We analyze \( \Pem \) by dividing it into several steps, as shown in Table~\ref{tab:steps}.

\begin{lemma}\label{em:rank}
Let $C_0\in \overline{\Srank}$ and $\Xi_\Pem(C_0,\Gamma)$.
The probability that the execution reaches $\Srank$ within $O(n)$ time is at least \(1/10\).
\end{lemma}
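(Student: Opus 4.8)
The plan is to reduce this lemma essentially to a statement about the ranking protocol \textsc{Optimal-Silent-SSR}, since the extra variables \( \ans \) and \( \timer \) introduced by \( \Pem \) do not interfere with the dynamics that drive convergence to \( \Srank \). First I would observe that the \textsc{Optimal-Silent-SSR} part of each interaction (line 1) runs exactly as in \cite{BCCDNSX21}, and that the Swapping step (lines 10--11) only exchanges the complete state between two \( \Settled \) agents. As noted in the protocol description, swapping states between a pair of agents cannot affect the correctness or the silence time of the ranking protocol, because the ranking protocol treats agents symmetrically: relabeling which physical agent holds which ranking-state is irrelevant to whether the configuration is a valid ranking. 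So from the point of view of the ranking variables, an execution of \( \Pem \) induces a valid execution of \textsc{Optimal-Silent-SSR}, and the time to reach a configuration where all agents are \( \Settled \) with distinct ranks is governed by Lemma~\ref{ssrk:thm}.

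Next I would make the \( O(n) \)-time-with-constant-probability claim precise. Lemma~\ref{ssrk:thm} gives expected silence time \( s_{rank} = O(n) \) for \textsc{Optimal-Silent-SSR}; by Markov's inequality, starting from any configuration the ranking protocol reaches a configuration in which every agent is \( \Settled \) with a distinct rank — i.e. a configuration in \( \Srank \) when we forget the \( \Pem \)-specific variables — within, say, \( 2 s_{rank} = O(n) \) time with probability at least \( 1/2 \). The only subtlety is that \( \Srank \) as defined requires \emph{all} agents to have \( \Role = \Settled \) simultaneously and all ranks distinct; this is precisely the silent configuration of the ranking protocol, so reaching silence of \textsc{Optimal-Silent-SSR} implies reaching \( \Srank \). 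The constant \( 1/10 \) in the statement then follows with generous slack — one picks the time bound \( t_{rank} \cdot n \) (the "sufficient time for \textsc{Optimal-Silent-SSR} to stabilize" fixed earlier) large enough that the failure probability is below \( 9/10 \); using Markov with the constant \( t_{rank} \) chosen so that \( s_{rank}/(t_{rank} n) \le 1/10 \) directly yields success probability at least \( 9/10 > 1/10 \), and the weaker bound \( 1/10 \) is stated presumably for uniformity with the other rows of Table~\ref{tab:steps}.

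The one place that needs a genuine argument, rather than a citation, is the claim that the \( \Pem \)-specific updates never create new ranking \emph{inconsistencies} that the ranking protocol would not itself have produced — in particular, that lines 2--8 and 19--24 either do nothing relevant or only ever \emph{trigger} \textsc{Propagate-Reset} in the same manner the ranking protocol does (setting \( \Role \gets \Resetting \), \( \leader \gets L \), \( \resetcount \gets R_{max} \)), which is a legitimate transition of the ranking protocol's reset mechanism. I would verify by inspection of Algorithm~\ref{protocol:SSEMN} that every write to a ranking variable outside line 1 is of this form (line 24), so the projected execution on ranking variables is still a valid \textsc{Optimal-Silent-SSR} execution, and Lemma~\ref{ssrk:thm} applies verbatim. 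I expect this bookkeeping — checking that no line of \( \Pem \) corrupts the ranking state in an illegal way — to be the main (though not deep) obstacle; once it is dispatched, the probabilistic bound is an immediate Markov-inequality consequence of Lemma~\ref{ssrk:thm}.
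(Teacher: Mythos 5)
There is a genuine gap in your reduction. You claim that every write to a ranking variable outside line~1 is ``a legitimate transition of the ranking protocol's reset mechanism,'' so that the projected execution is a valid \textsc{Optimal-Silent-SSR} execution and Lemma~\ref{ssrk:thm} applies verbatim. But the issue is not whether the transition is \emph{legal}; it is that lines~21--24 introduce an \emph{additional} reset trigger that \textsc{Optimal-Silent-SSR}'s time analysis does not account for. The agent with $\Rank = \lceil n/2\rceil$ fires \textsc{Propagate-Reset} whenever its $\timer$ has reached $0$ and it meets an agent whose $\ans$ differs from its own --- a condition driven entirely by the $\Pem$-specific variables, which in a bad initial configuration can hold while the ranking is still in progress. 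Such a reset wipes out all ranking progress, so the time to reach $\Srank$ under $\Pem$ is \emph{not} bounded by the silence time $s_{rank}$ of the standalone ranking protocol, and your Markov-inequality step (giving probability $\ge 1/2$, or even $9/10$) does not follow.

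The paper's proof is precisely a workaround for this interference. It splits into two cases: if the $\ans$-triggered reset never fires before $\Srank$ is reached, your argument goes through and gives probability $\ge 1/2$ via Markov. If it does fire, the paper argues that after the reset an agent acquires $\Rank = \lceil n/2\rceil$ within $2 s_{rank}$ time with probability $\ge 1/2$ and, at that moment, sets $\timer \gets 7(t_{rank}+4)$; a Chernoff bound on the number of meetings between the rank-$\lceil n/2\rceil$ agent and the rank-$n$ agent during $t_{rank}\cdot n^2$ interactions then shows that with probability $\ge 1/5$ the $\timer$ does not reach $0$ before ranking completes, so no further spurious reset occurs. This yields the $1/2 \cdot 1/5 = 1/10$ in the statement --- the constant is not ``generous slack'' for uniformity with Table~\ref{tab:steps}, but the actual product of these two bounds. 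Your proposal is missing the second case entirely, and without it the lemma is not established.
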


\begin{proof}
First, we consider the case that the agent with \( \Rank = \lceil n/2 \rceil \) does not start \textsc{Propagate-Reset} by interacting with an agent that has a different \( \ans \) until the execution reaches $\Srank$.
In this case, from Lemma~\ref{ssrk:thm}, agents complete the ranking within $s_{rank}$ expected time.
From Markov's inequality, the probability that ranking has completed within $2\cdot s_{rank}$ time is at least $1/2$.

Second, we consider the case where the agent with \( \Rank = \lceil n/2 \rceil \) starts \textsc{Propagate-Reset} by interacting with an agent that has a different \( \ans \) until the execution reaches \( \Srank \).
In this case, after \textsc{Propagate-Reset}, an agent will be assigned the rank \( \lceil n/2 \rceil \) within \( 2\cdot s_{rank} \) time with at least \( 1/2 \) probability from Markov's inequality.
At the time, the agent sets \( \timer \) to \( 7(t_{rank} + 4) \).
We analyze the probability that the \( \timer \) does not reach 0 before the ranking is completed.
Since the silence time of the ranking is \( s_{rank} \leq t_{rank} \cdot n \), we analyze the probability that \( \timer \) does not reach 0 during \( t_{rank} \cdot n^2 \) interactions.
Let \( Z \sim B(n^2\cdot t_{rank}, 2/(n(n-1))) \) be a binomial random variable representing the number of times the agent with \( \Rank = \lceil n/2 \rceil \) interacts with the agent with $\Rank=n$ during \( n^2\cdot t_{rank} \) interactions.
From the Chernoff bound (Eq.~(4.2) in~\cite{Arisu}, with \( \delta = 2/3 \)), 
$\Pr(Z\ge(1+2/3)E[Z])\le e^{-2/3(2/3)^2}<4/5$.
Since $(1+2/3)E[Z]\le \frac{20}{3}t_{rank}< 7t_{rank}$, that probability is at last $1/5$.
Thus, the execution reaches $\Srank$ with probability at least $1/10$.
\end{proof}

\begin{lemma}\label{em:swap}
Let $C_0\in \Srank$ and $\Xi_\Pem(C_0,\Gamma)$.
If \normalfont{\textsc{Propagate-Reset}} does not occur, the expected time for the execution to reach \( \Sswap \) is at most \( n \).
\end{lemma}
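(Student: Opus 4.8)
The plan is a potential-function argument on the number of Red-input agents that still occupy a rank above $|V_a|$, observing that this quantity evolves as a pure death chain whose successive sojourn times are geometric.

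\emph{Setting up the potential.} Write $r=|V_a|$. First I would check that, under the hypotheses of the lemma, the execution never leaves $\Srank$. In a configuration of $\Srank$ every agent has a distinct rank in $[1,n]$ and none has $\Role=\Resetting$, so the reset code of lines~22--24 (i.e.\ \textsc{Propagate-Reset}) is never triggered --- this is precisely the excluded event --- and \textsc{Optimal-Silent-SSR} makes no change, since such a configuration is silent for it and, as noted after Algorithm~\ref{protocol:SSEMN}, exchanging the states of two agents affects neither the correctness nor the silence of the ranking part. The swap of lines~10--11 only exchanges all state variables (in particular the two ranks) between the interacting agents, so it again keeps every agent $\Settled$ with pairwise-distinct ranks. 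Hence throughout the execution the ranks form a permutation of $[1,n]$, and we may define $X(C)$ to be the number of agents $v$ with $v.\iinput=\Red$ and $C(v).\Rank>r$. Counting inside ranks $1,\dots,r$ (which host exactly $r-X(C)$ Red-input agents) shows $X(C)$ also equals the number of Blue-input agents of rank $\le r$; moreover $0\le X(C)\le\lfloor n/2\rfloor$ and $C\in\Sswap$ iff $X(C)=0$.

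\emph{Dynamics of $X$.} Within $\Srank$ and with \textsc{Propagate-Reset} excluded, the only transition that can change $X$ is the swap of lines~10--11, which takes a Blue-input agent of some rank $p$ and a Red-input agent of some rank $q$ with $p<q$ and exchanges their ranks; since this can only move a Red-input agent to a \emph{lower} rank, $X$ never increases, and a short case analysis on the positions of $p$ and $q$ relative to $r$ shows that $X$ decreases by exactly $1$ precisely when $p\le r<q$, and is unchanged otherwise. Consequently, when $X=x\ge1$, the interactions that decrease $X$ are exactly the ordered pairs $(a_0,a_1)$ with $a_0$ a Blue-input agent of rank $\le r$ and $a_1$ a Red-input agent of rank $>r$ --- the rank and input tests of line~10 then hold automatically --- and there are exactly $x^2$ of these among the $n(n-1)$ equiprobable ordered pairs. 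So each step decreases $X$, and it does so by exactly $1$, with probability exactly $x^2/(n(n-1))$.

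\emph{Summing the sojourn times.} Since $X_0:=X(C_0)\le\lfloor n/2\rfloor$ and $X$ moves only downward and only by $1$ at a time, the execution reaches $\Sswap$ after $X$ has made $X_0$ downward steps. The number of interactions spent at level $x$ before the next downward step is geometric with parameter $x^2/(n(n-1))$, hence has expectation $n(n-1)/x^2$, so by linearity the expected number of interactions to reach $\Sswap$ is $\sum_{x=1}^{X_0} n(n-1)/x^2$. As $\sum_{x\ge1}x^{-2}$ converges, this is $O(n^2)$ interactions, i.e.\ $O(n)$ parallel time; bounding the sum explicitly (using $X_0\le n/2$) gives the stated constant.

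\emph{Main obstacle.} The only genuinely delicate step is the first one: arguing cleanly that a fully-ranked configuration is inert for \textsc{Optimal-Silent-SSR} and that the color-swap of lines~10--11 cannot create a rank conflict or inconsistency that would make the ranking part react (so that one really stays in $\Srank$ until $\Sswap$ is reached). Once that is settled, the case analysis for $X$ and the geometric-race summation are routine.
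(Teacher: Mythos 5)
Your proof is essentially the paper's own argument: the paper defines $V_A$ (Red-input agents of rank $>m$) and $V_B$ (Blue-input agents of rank $\le m$), notes $|V_A|=|V_B|$, observes that exactly the interactions between $V_A$ and $V_B$ decrement both sets by one, and sums the geometric sojourn times $\sum_i n(n-1)/(2i^2)<n^2$ interactions using $\sum i^{-2}<\pi^2/6$; your potential $X$ is exactly $|V_A|$ and your case analysis and geometric-race summation match. The one substantive discrepancy is the per-step success probability: the paper counts \emph{both} orderings of a qualifying pair, i.e.\ $2x^2$ ordered pairs and probability $2x^2/(n(n-1))$, which is what makes the sum come out below $n^2$ interactions and hence at most $n$ parallel time; you, reading line~10 literally (only the lower-ranked Blue initiator triggers the swap), count $x^2$ ordered pairs, and then the sum is $\sum_{x}n(n-1)/x^2<\tfrac{\pi^2}{6}n(n-1)$, i.e.\ about $1.65n$ parallel time --- so your closing claim that this ``gives the stated constant'' does not follow from your own accounting, and you would only prove the bound $\tfrac{\pi^2}{6}n$ rather than $n$. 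This is a constant-factor issue (arguably a mismatch between the paper's pseudocode and its proof rather than an error on your side) and it affects nothing downstream, but as a proof of the lemma \emph{as stated} you should either read the swap condition symmetrically or weaken the constant. Your preliminary verification that the execution remains in $\Srank$ is a point the paper leaves implicit, and is a welcome addition.
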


\begin{proof}
Let \( m \) be the number of agents with input \( \Red \).
Let \( V_B \) be the set of agents with input \( \Blue \) and $\Rank$ at most \( m \), and let \( V_A \) be the set of agents with input \( \Red \) and $\Rank$ greater than \( m \) at initial configuration.
Note that $|V_A|=|V_B|$.
The sizes of \( V_A \) and \( V_B \) each decrease by 1 if and only if an interaction occurs between \( a \in V_A \) and \( b \in V_B \).  
Additionally, in any other interactions, the sizes of \( V_A \) and \( V_B \) remain unchanged.
The probability that an interaction occurs between \( a \in V_A \) and \( b \in V_B \) for each interaction is $\frac{2|V_A||V_B|}{n(n-1)}$.
Thus, the number of interactions until $|V_A|$ and $|V_B|$ become $0$ is $\sum_{i=1}^{|V_A|}\frac{n(n-1)}{2i^2}=\frac{n(n-1)}{2}\sum_{i=1}^{|V_A|}\frac{1}{i^2}<\frac{n(n-1)}{2}\cdot \frac{\pi^2}{6}<n^2$.
Therefore, after dividing by \( n \), the expected time for the execution to reach \( \Sswap \) is at most \( n \).
\end{proof}

\begin{lemma}\label{em:tswap}
Let $C_0\in \Srank \cap \overline{\Tswap}$ and $\Xi_\Pem(C_0,\Gamma)$.
The probability that the execution reaches $\Tswap\cup \Stim$ within $O(n)$ time is at least \(1/20\).
\end{lemma}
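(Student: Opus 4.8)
The plan is to split the analysis according to whether \textsc{Propagate-Reset} is triggered before the execution settles into $\Tswap\cup\Stim$, mirroring the two-case structure used in Lemma~\ref{em:rank}. First I would observe that, starting from $C_0\in\Srank$, as long as \textsc{Propagate-Reset} does not occur the roles stay $\Settled$ and the ranks are fixed; so the only relevant dynamics are the Swapping moves (lines 10--11), the Decision moves (lines 12--18), and the $\timer$ decrements (lines 19--20). By Lemma~\ref{em:swap}, absent \textsc{Propagate-Reset} the execution reaches $\Sswap$ in expected time at most $n$, hence by Markov's inequality within $2n$ time with probability at least $1/2$. The subtlety is that reaching $\Sswap$ is not yet $\Tswap\cup\Stim$: we also need either the $\timer$ of the agent with $\Rank=\lceil n/2\rceil$ to still be at least $28$ (landing in $\Tswap$), or the configuration to already be in $\Stim=\Sswap\cap\Sout$.

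The key quantitative step is to control the $\timer$ of the agent $v$ with $\Rank=\lceil n/2\rceil$. This timer is only ever decremented when $v$ meets the agent of rank $n$ (line 20), one unit per such meeting, and it is reset to $7(t_{rank}+4)$ only when $v$ is (re)ranked. So along a \textsc{Propagate-Reset}-free stretch, I need: in the $O(n)$ time it takes swapping to finish, the number of $v$-to-rank-$n$ interactions is at most $7(t_{rank}+4)-28 = 7t_{rank}$, so that the timer never drops below $28$. This is exactly the Chernoff computation already carried out in the proof of Lemma~\ref{em:rank}: with $Z\sim B(n^2 t_{rank},\,2/(n(n-1)))$ counting those interactions over $n^2 t_{rank}$ steps (an overcount of the $2n^2$-step window needed above), $\Pr(Z\ge 7t_{rank})<4/5$, so with probability at least $1/5$ the timer stays $\ge 28$. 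Combining with the $1/2$ from Markov for swapping completion within $2n$ time, the \textsc{Propagate-Reset}-free branch reaches $\Tswap$ with probability at least $1/10$.

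Next I would handle the branch where \textsc{Propagate-Reset} does fire before $\Tswap\cup\Stim$ is reached. When it fires, all roles become $\Resetting$ and, by Lemma~\ref{ssrk:reset} together with Lemma~\ref{ssrk:thm}, an agent is reassigned rank $\lceil n/2\rceil$ within $O(n)$ time — concretely within $2s_{rank}$ time with probability at least $1/2$ by Markov — and at that instant (lines 5--6) its $\timer$ is freshly set to $7(t_{rank}+4)\ge 28$. If at that moment the configuration already lies in $\Srank$, it lies in $\Tswap$ and we are done; if not, the same Chernoff argument as above shows the timer is still $\ge 28$ when ranking and swapping complete $O(n)$ time later, with probability at least $1/5$, and the configuration then lies in $\Tswap$ (it is in $\Sswap$ by Lemma~\ref{em:swap}, hence in $\Srank$, and the timer bound gives membership in $\Tswap$). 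So this branch also succeeds with probability at least $1/10$. Since the two branches are exhaustive, a union-of-cases / conditioning argument gives overall success probability at least $1/20$ within $O(n)$ time, which is the claim. The main obstacle I anticipate is bookkeeping the timer across a possible \textsc{Propagate-Reset}: one must argue the timer value seen upon re-ranking is large enough, and that no second \textsc{Propagate-Reset} intervenes in the $O(n)$-time window being analyzed (handled by absorbing its probability into the constant, exactly as the table's success probabilities are composed in Lemma~\ref{em:stab}).
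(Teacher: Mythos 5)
Your second branch (the one where \textsc{Propagate-Reset} fires) matches the paper's argument: after the reset, the agent re-ranked to $\lceil n/2\rceil$ gets a fresh $\timer=7(t_{rank}+4)$, and the Chernoff bound on its meetings with the rank-$n$ agent over the $O(n)$ re-ranking window shows $\timer\ge 28$ upon re-entering $\Srank$, giving $\Tswap$ with probability at least $1/10$. That part is sound.

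Your first branch, however, has a genuine gap. Starting from $C_0\in\Srank\cap\overline{\Tswap}$, the agent with $\Rank=\lceil n/2\rceil$ already has $\timer<28$ --- that is exactly what $\overline{\Tswap}$ says --- and along a \textsc{Propagate-Reset}-free stretch this timer can only decrease: lines 5--6 refresh it only when a role \emph{becomes} $\Settled$, which cannot happen while all agents remain $\Settled$ in $\Srank$ (swapping merely permutes states and does not reset the timer). So the quantity you bound with Chernoff, ``the timer never drops below $28$,'' is already violated at time $0$, and the reset-free branch can never land in $\Tswap$. The only way that branch can succeed is by landing in $\Stim=\Sswap\cap\Sout$, which you mention but never analyze. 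The paper closes this by a different dichotomy: from $\Sdec$, either every agent's $\ans$ is already the exact majority opinion (so the execution is in $\Stim$ and that branch is done with no timer argument at all), or some agent has a wrong $\ans$, in which case the rank-$\lceil n/2\rceil$ agent meets such an agent in expected time $(n-1)/2$ and \emph{triggers} \textsc{Propagate-Reset} --- pushing the execution into your second branch, where the fresh timer and the Chernoff bound then apply. You need this ``reach $\Stim$ or force a reset'' step; without it the claimed $1/10$ for the reset-free branch does not hold, and the $1/20$ combination does not follow.
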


\begin{proof}
First, we show that the expected time until either \textsc{Propagate-Reset} occurs or the execution reaches \( \Sdec \) is at most $1/2+(n-1)/2$.
We will derive this by analyzing the time to reach $\Sdec$ without \textsc{Propagate-Reset} occurring.
From Lemma~\ref{em:swap}, the execution reaches $\Sswap$ within $n$ expected time.
If \( n \) is even, for the execution starting from a configuration belonging to \( \Sswap \) to reach \( \Sdec \), it is sufficient for the agent with \( \Rank = n/2 \) and the agent with \( \Rank = n/2 + 1 \) to interact.
Since the probability that those two agents interact in each interaction is \( 2/(n(n-1)) \), the expected time for this to happen is \( (n-1)/2 \).  
If \( n \) is odd, for the execution starting from a configuration belonging to \( \Sswap \) to reach \( \Sdec \), it is sufficient for the agent with \( \Rank = \lceil n/2 \rceil \) to participate in an interaction.  
Since the probability of that agent participating in an interaction is \( 2/n \), the expected time is \( 1/2 \).
Thus, the execution starting from $C_0$ reaches $\Sdec$ within $1/2+(n-1)/2$ in expectation.

After the execution reaches \( \Sdec \), if there are agents with an \( \ans \) that is not the exact majority opinion, an interaction between the agent with $\Rank=\lceil n/2\rceil$ and such an agent will trigger \textsc{Propagate-Reset}.  
The probability of this happening is at least \( 2/(n(n-1)) \), so the expected time is \( (n-1)/2 \).
If the \( \ans \) of all agents are the exact majority opinion, the execution has reached to \( \Stim \).
Therefore, the expected time the execution starting from $C_0$ reaches $\Stim$ is bounded by $1/2+(n-1)/2+(n-1)/2=n-1/2$.
From Markov' inequality, the probability that the execution reaches $\Stim$ within $2n-1$ time is at least $1/2$.

Now, we analyze the time and probability for the execution to reach \( \Tswap \) after \textsc{Propagate-Reset} occurs.
From Lemma~\ref{ssrk:thm} and Markov's inequality, if the agent with \( \Rank = \lceil n/2 \rceil \) does not trigger \textsc{Propagate-Reset} again, the execution reaches \( \Srank \) within \( 2 \cdot s_{rank} \) time with probability at least \( 1/2 \).
During the ranking, when the agent with \( \Rank = \lceil n/2 \rceil \) is created, its $\timer$ is set to \( 7(t_{rank} + 4) \).  
Thus, we analyze the probability that its \( \timer \geq 28 \) when the execution reaches \( \Srank \).
Since \( s_{rank} \leq t_{rank} \cdot n \), we analyze the probability that \( \timer \geq 28 \) after \( n^2 \cdot t_{rank} \) interactions.
Let \( Z \sim B(n^2\cdot t_{rank}, 2/(n(n-1))) \) be a binomial random variable representing the number of times the agent with \( \Rank = \lceil n/2 \rceil \) interacts with the agent with $\Rank=n$ during \( n^2\cdot t_{rank} \) interactions.
From the Chernoff bound (Eq.~(4.2) in~\cite{Arisu}, with \( \delta = 2/3 \)),
$\Pr(Z\ge(1+2/3)E[Z])\le e^{-2/3(2/3)^2}<4/5$.
Since $(1+2/3)E[Z]\le \frac{20}{3}t_{rank}< 7t_{rank}$, that probability is at last $1/5$.
Therefore, after \textsc{Propagate-Reset} occurs, the execution reaches $\Tswap$ within $O(n)$ time with probability at least $1/10$. 

From the above, the execution reaches $\Tswap\cup \Stim$ within $O(n)$ time with probability at least $1/20$.
\end{proof}

\begin{lemma}\label{em:dec}
Let $C_0\in \Tswap$ and $\Xi_\Pem(C_0,\Gamma)$.
The probability that the execution reaches \( \Sdec \) within $3n-1$ time is at least \( 1/8 \).
\end{lemma}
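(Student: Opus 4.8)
The plan is to couple the real execution $\Xi_\Pem(C_0,\Gamma)$ with a \emph{reset-free} companion execution, run under the same scheduler $\Gamma$, in which the reset trigger of lines~21--24 is switched off while everything else --- including the timer decrement of line~20 --- still runs. First I would observe that, starting from $C_0\in\Tswap\subseteq\Srank$, the ranking protocol \textsc{Optimal-Silent-SSR} performs no state change, so the only transition that can leave $\Srank$ is the one in lines~21--24, and it fires only when the agent currently holding $\Rank=\lceil n/2\rceil$ has $\timer=0$. Because no agent becomes $\Settled$ while the configuration stays in $\Srank$, this agent's timer is never re-initialised by lines~5--6; it is only decremented by line~20, on interactions with the $\Rank=n$ agent, and it is carried along with the label $\Rank=\lceil n/2\rceil$ through the swaps of lines~10--11. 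Hence the real and reset-free executions agree until the first interaction at which this timer reaches $0$, and since it is at least $28$ in $C_0$, reaching $0$ requires at least $28$ interactions between the $\Rank=\lceil n/2\rceil$ agent and the $\Rank=n$ agent.

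Next I would analyse the reset-free execution, in which \textsc{Propagate-Reset} never occurs, so Lemma~\ref{em:swap} applies verbatim and it reaches $\Sswap$ within expected time at most $n$. In any configuration of $\Sswap$, ranks $1,\dots,|V_a|$ are exactly the agents with input $\Red$ and ranks $|V_a|+1,\dots,n$ exactly those with input $\Blue$, and no further swap is possible; I would then verify that the decision rule is correct. If $n$ is odd, then $\lceil n/2\rceil=(n+1)/2$, a tie is impossible, the input of the $\Rank=\lceil n/2\rceil$ agent is the exact majority opinion, and the first interaction it takes part in (expected time $1/2$) executes line~18 and places the configuration in $\Sdec$. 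If $n$ is even, then $\lceil n/2\rceil=n/2$, and the inputs of the $\Rank=n/2$ and $\Rank=n/2+1$ agents determine the three cases exactly as lines~13--16 prescribe, so the first interaction between those two agents (expected time $(n-1)/2$) places the configuration in $\Sdec$. (A decision taken before $\Sswap$ is reached does no harm, since afterwards the same lines re-fire with the correct inputs and $\Sswap$ is no longer left.) Thus the reset-free execution reaches $\Sdec$ within expected time at most $n+(n-1)/2=(3n-1)/2$, and by Markov's inequality it does so within $3n-1$ time with probability at least $1/2$.

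Finally I would bound the probability that the two executions diverge within $3n-1$ time. That time comprises fewer than $3n^2$ interactions, and at each step the $\Rank=\lceil n/2\rceil$ and $\Rank=n$ agents interact with probability $2/(n(n-1))$ regardless of history, so the number of such interactions is stochastically dominated by a binomial variable of mean at most $6+4/(n-1)$; a Chernoff bound then bounds the probability that it reaches $28$ by at most $1/8$. On the complementary event the real execution coincides with the reset-free one throughout $[0,3n-1]$, hence reaches $\Sdec$ within $3n-1$ time whenever the reset-free one does, and combining the two estimates gives a probability of at least $1/2-1/8=3/8\ge 1/8$.

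The step I expect to be the main obstacle is twofold: verifying the correctness of the decision rule --- arguing that in $\Sswap$ the $\Rank=\lceil n/2\rceil$ agent is the ``median'', so that for odd $n$ its input already equals the exact majority opinion (with a tie impossible) and for even $n$ the pair of ranks $n/2,n/2+1$ resolves the cases $\Red$, $\Blue$, $\tie$ --- and making the coupling airtight, i.e.\ showing rigorously that the timer is never reset while in $\Srank$, that it travels with the rank label through swaps, and that the real and reset-free executions are identical up to the first time the timer vanishes.
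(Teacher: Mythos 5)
Your proposal is correct and follows essentially the same route as the paper's proof: bound the expected time for swapping plus the decision interaction assuming no reset (Lemma~\ref{em:swap} plus the $(n-1)/2$ or $1/2$ decision step, then Markov), and separately use a binomial/Chernoff bound on the number of interactions between the $\Rank=\lceil n/2\rceil$ and $\Rank=n$ agents to show the timer, starting at $\ge 28$, does not expire in that window. Your explicit coupling with a reset-free execution and the additive combination $1/2-1/8$ is a cleaner packaging of the paper's implicit case split and product $1/4\cdot 1/2$, and the small slip in the binomial mean ($6+6/(n-1)$ rather than $6+4/(n-1)$) is inconsequential.
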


\begin{proof}
We analyze the case where the execution reaches \( \Sdec \) without \textsc{Propagate-Reset} occurring.

First, we analyze the time and probability for the execution to reach \( \Sdec \) if \textsc{Propagate-Reset} does not occur.
From Lemma~\ref{em:swap} and Markov's inequality, agents finish swapping within $2n$ time with probability at least $1/2$.
After swapping, decision is finished within $(n-1)/2$ time in expectation as shown in the proof of Lemma~\ref{em:tswap}.
From Markov's inequality, agents finish decision within $n-1$ time with probability at least $1/2$.
Therefore, agents finish swapping and decision within $3n-1$ time with probability at least $1/4$.

Next, we analyze the probability that \textsc{Propagate-Reset} does not occur within \( 4n \) time.
Let $Z\sim B(4n^2, 2/(n(n-1)))$ be a binomial random variable representing the number of times the agent with $\Rank=\lceil n/2\rceil$ interacts with the agent with $\Rank=n$ during $4n^2$ interactions.
From the Chernoff bound (Eq.~(4.2) in~\cite{Arisu}, with \( \delta = 2/3 \)) and $E[Z]\ge 8$, $\Pr(Z\ge (1+2/3)E[Z])\le e^{-8/3(2/3)^2}\le 1/2$.
Since $(1+2/3)E[Z]\le 80/3<27$, the \( \timer \) of the agent with \( \Rank = \lceil n/2 \rceil \) does not reach $0$ during $4n$ time with probability at least $1/2$.

From the above, the probability for the execution to reach \( \Sdec \) within $3n-1$ time is at least \( 1/8 \).
\end{proof}

\begin{lemma}\label{em:timzero}
Let $C_0\in \Srank$ and $\Xi_\Pem(C_0,\Gamma)$.
The variable $\timer$ of the agent with $\Rank=\lceil n/2 \rceil$ reaches $0$ within $O(n)$ time with probability at least $1/2$.
\end{lemma}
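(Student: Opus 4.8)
The plan is to bound the \emph{expected} time until the timer of the rank-$\lceil n/2\rceil$ agent reaches $0$ by $O(n)$, and then invoke Markov's inequality. The crucial structural observation is the following: the timer of the rank-$\lceil n/2\rceil$ agent is set to its maximum $7(t_{rank}+4)=O(1)$ only by line~6, which fires only when an agent's $\Role$ turns $\Settled$, i.e.\ during a (re-)ranking; it is decremented by line~20 whenever the rank-$\lceil n/2\rceil$ agent interacts with the rank-$n$ agent; and it is otherwise merely moved around by a swap (lines~10--11), which can never push it above $7(t_{rank}+4)$. Moreover, \textsc{Propagate-Reset} is triggered only by \textsc{Optimal-Silent-SSR} detecting an inconsistency, or by line~24 of $\Pem$; the latter requires the timer to already be $0$. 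Hence, from a configuration in $\Srank$ in which the ranking variables are consistent (a ranking-silent configuration), no \textsc{Propagate-Reset} occurs before the timer reaches $0$.

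First I would handle the case of a ranking-consistent $C_0\in\Srank$. Since no \textsc{Propagate-Reset} occurs before the timer reaches $0$ (and if the timer reaches $0$ first we are already done), Lemma~\ref{em:swap} applies and the execution reaches $\Sswap$ within $n$ expected time. Once in $\Sswap$ there are no further swaps (no pair of agents satisfies the swap condition) and no role transitions, so the rank-$\lceil n/2\rceil$ timer is monotone non-increasing, bounded by $7(t_{rank}+4)$, and drops by exactly $1$ each time the rank-$\lceil n/2\rceil$ agent meets the rank-$n$ agent. The number of interactions until $7(t_{rank}+4)$ such meetings occur is a negative-binomial random variable with mean $7(t_{rank}+4)\cdot\frac{n(n-1)}{2}=O(n^{2})$, i.e.\ $O(n)$ parallel time. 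Adding the $O(n)$ expected time to reach $\Sswap$, the expected time to drive the timer to $0$ from a ranking-consistent $C_0\in\Srank$ is $O(n)$.

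For a general $C_0\in\Srank$ the ranking variables may be inconsistent, so \textsc{Propagate-Reset} may fire; the point is that this cannot recur indefinitely before the timer reaches $0$. Applying the self-stabilization and silence guarantees of the ranking subroutine (Lemmas~\ref{ssrk:prop}, \ref{ssrk:reset}, \ref{ssrk:thm})---which are unaffected by $\Pem$, since $\Pem$ touches the ranking variables only through swaps, whose harmlessness for the ranking protocol was noted in the description of Algorithm~\ref{protocol:SSEMN}, and through line~24, which acts only once the timer is $0$---the execution reaches a ranking-silent configuration, hence a configuration in $\Srank$ of the kind treated above, within $O(n)$ expected time (or the timer reaches $0$ even earlier, in which case we are done). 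Combining with the previous paragraph, the expected time until the timer of the rank-$\lceil n/2\rceil$ agent reaches $0$ from any $C_0\in\Srank$ is $O(n)$, and Markov's inequality yields that it reaches $0$ within $O(n)$ time with probability at least $1/2$.

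I expect the main obstacle to be making the last paragraph rigorous: controlling the interleaving of the ranking subroutine's own resets with the swap and decision steps of $\Pem$, and certifying that the $O(n)$ expected-time bounds imported from~\cite{BCCDNSX21} survive this interleaving. The secondary subtlety is that a swap can replace the rank-$\lceil n/2\rceil$ timer by another (arbitrary, but still $\le 7(t_{rank}+4)$) value, which is precisely why the counting of decrements must be started only after $\Sswap$ has been reached.
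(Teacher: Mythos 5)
Your core argument is exactly the paper's proof: the timer of the rank-$\lceil n/2\rceil$ agent is at most $7(t_{rank}+4)=O(1)$, each decrement requires one meeting between that agent and the rank-$n$ agent, each such meeting takes $(n-1)/2$ expected parallel time, so by linearity the expected time to hit $0$ is $O(n)$, and Markov's inequality gives probability at least $1/2$ within twice that. The paper's proof consists of those four sentences and nothing else. The extra scaffolding you wrap around this is partly based on a misreading: the swap at line~11 exchanges the \emph{entire} states of the two agents, so $\Rank$ and $\timer$ travel together and the timer value held at rank $\lceil n/2\rceil$ is invariant under swaps; there is therefore no need to first wait for the execution to reach $\Sswap$ before counting decrements, and your ``secondary subtlety'' dissolves. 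Your other concern --- that from an arbitrary $C_0\in\Srank$ a residual inconsistency in the ranking variables could trigger \textsc{Propagate-Reset}, re-settle the rank-$\lceil n/2\rceil$ agent, and refill its timer via line~6 --- is a real issue that the paper's proof silently ignores rather than resolves; your third paragraph is a reasonable (if, as you admit, not fully rigorous) patch for it. So the proposal is correct and, on this point, more careful than the paper, at the cost of an unnecessary detour through $\Sswap$.
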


\begin{proof}
Let \( x \) be the value of the \( \timer \) of the agent with \( \Rank = \lceil n/2 \rceil \).  
We analyze the expected time for the agent with \( \Rank = \lceil n/2 \rceil \) to interact with the agent with \( \Rank = n \) exactly \( x \) times.
Since the probability that those two agents interact is $2/(n(n-2))$, the expected time to interact once is $(n-1)/2$.
By the linearity of expectation, the expected time for the \( \timer \) to reach 0 is \( x(n-1)/2 \).
From Markov's inequality, the $\timer$ reaches $0$ within $x(n-1)\le 7(t_{rank}+4)(n-1)=O(n)$ time with probability at least $1/2$.
\end{proof}

\begin{lemma}\label{em:tim}
Let $C_0\in \Sdec$ and $\Xi_\Pem(C_0,\Gamma)$.
The probability that the execution reaches \( \Stim\) within \( O(n) \) time is at least \( 1/1280 \).
\end{lemma}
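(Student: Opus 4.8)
The plan is to start from $C_0 \in \Sdec$ and track what must happen for the execution to land in $\Stim = \Sswap \cap \Sout$. Since $\Sdec \subseteq \Sswap$, swapping is already done; the remaining obstacle is to make every agent's $\ans$ equal to the exact majority opinion (which is what $\Sout$ requires), \emph{and} to stay inside $\Sswap$ while doing so. In $C_0$ the agent with $\Rank = \lceil n/2 \rceil$ already holds the correct $\ans$ by definition of $\Sdec$, but other agents may not. First I would argue that the only mechanism that can fix the incorrect agents in bounded time is \textsc{Propagate-Reset}: the agent with $\Rank = \lceil n/2 \rceil$ must, via line~22, detect a disagreeing agent, which requires (i) its $\timer$ to reach $0$ and (ii) an interaction with some agent whose $\ans$ differs. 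By Lemma~\ref{em:timzero}, the $\timer$ reaches $0$ within $O(n)$ time with probability at least $1/2$; then within a further $O(n)$ expected time (probability $\ge 1/2$ by Markov within, say, $n-1$ time) the $\lceil n/2\rceil$-agent meets a disagreeing agent and triggers \textsc{Propagate-Reset} (lines~23--24). If in fact \emph{no} agent disagrees, the execution is already in $\Sout \cap \Sswap = \Stim$ and we are done immediately, so we may assume the reset fires.

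Next I would follow the reset through. Once \textsc{Propagate-Reset} starts, by Lemma~\ref{ssrk:reset} the execution leaves the $\Resetting$ phase within $O(n)$ time with high probability (certainly with probability $\ge 1/2$ by Markov on a suitable multiple), and by Lemma~\ref{em:rank}-style reasoning a fresh ranking is completed; crucially, during the $\Resetting$ phase every agent other than the $\lceil n/2\rceil$-agent sets $\ans \gets \phi$ (lines~3--4) and then the correct opinion is epidemically spread over $\phi$ (lines~7--8), which takes $O(\log n)$ time. After re-ranking and re-swapping (again $O(n)$ expected time, probability $\ge 1/2$ by Markov), and re-decision ($(n-1)/2$ expected time), \emph{every} agent now carries the exact majority opinion, so the configuration lies in $\Sout$, and since swapping is complete it lies in $\Sswap$ as well, i.e.\ in $\Stim$. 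I would also need to check that during re-ranking the $\timer$ of the new $\lceil n/2\rceil$-agent does not prematurely reach $0$ and retrigger a reset before decision finishes — this is the same Chernoff computation ($Z \sim B(t_{rank} n^2, 2/(n(n-1)))$, $\delta = 2/3$, giving probability $\ge 1/5$) already used in Lemmas~\ref{em:rank} and~\ref{em:tswap}.

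Finally I would multiply the constant success probabilities of the stages — $\timer$ hitting $0$ ($\ge 1/2$), triggering the reset ($\ge 1/2$), leaving $\Resetting$ ($\ge 1/2$), re-ranking with the timer surviving ($\ge 1/2 \cdot 1/5 = 1/10$), re-swapping ($\ge 1/2$), re-decision ($\ge 1/2$), and no spurious re-reset ($\ge 1/2$ by the Chernoff bound as in Lemma~\ref{em:dec}) — and each stage takes $O(n)$ time, so the total time is $O(n)$ and the combined probability is a fixed constant; bookkeeping should bring it down to at least $1/1280$. The main obstacle is precisely this bookkeeping: making the stages genuinely independent (or at least lower-bounding the joint probability by the product via conditioning in the right order), and ensuring that the "detect disagreement $\Rightarrow$ reset $\Rightarrow$ epidemic correction $\Rightarrow$ re-rank/swap/decide" cycle provably terminates in $\Stim$ rather than looping — which relies on the fact that after the epidemic step no agent disagrees, so the succeeding decision produces no further reset. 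I expect the cleanest write-up to reuse the estimates from Lemmas~\ref{em:rank}, \ref{em:tswap}, \ref{em:dec}, and~\ref{em:timzero} almost verbatim and simply chain them.
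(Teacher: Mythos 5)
Your proposal follows essentially the same route as the paper's proof: the trivial case where no agent disagrees, then timer-reaches-zero (Lemma~\ref{em:timzero}) plus a Markov bound to trigger \textsc{Propagate-Reset} with probability at least $1/4$, then the reset/epidemic/re-ranking phase with the timer surviving (the $1/10$ bound reused from Lemma~\ref{em:tswap}), and finally Lemma~\ref{em:dec} to land in $\Sdec \cap \Sout \subseteq \Stim$, with the stage probabilities multiplied to give the $1/1280$ constant. The decomposition, the supporting lemmas, and the Chernoff/Markov estimates all match the paper's argument.
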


\begin{proof}
If there is no agent whose $\ans$ is not the exact majority opinion in $C_0$, then $C_0\in \Stim$.

Now, we consider the case where there are agents whose $\ans$ is not the exact majority opinion.
In this case, after the \( \timer \) of the agent with \( \Rank = \lceil n/2 \rceil \) reaches 0, when the agent interacts with an agent whose \( \ans \) is not the exact majority opinion, \textsc{Propagate-Reset} occurs, and the exact majority opinion propagates to all agents.  
The variable \( \timer \) of the agent with \( \Rank = \lceil n/2 \rceil \) reaches 0 within \( O(n) \) time with probability at least \( 1/2 \), from Lemma~\ref{em:timzero}.
Since the probability that agent with $\Rank=\lceil n/2\rceil$ interacts with the agent whose $\ans$ is not the exact majority opinion is at least $2/(n(n-1))$, the expected time is $(n-1)/2$.
From Markov's inequality, such an interaction occurs within \( n - 1 \) time with probability at least \( 1/2 \).
Thus, \textsc{Propagate-Reset} occurs within $O(n)$ time with probability at least $1/4$.

During \textsc{Propagate-Reset}, the exact majority opinion propagates to all agents within $O(\log{n})$ time with high probability from Lemma~\ref{ssrk:prop}.
Since \textsc{Propagate-Reset} occurs, agents also execute \textsc{Optimal-Silent-SSR}.
Note that \textsc{Propagate-Reset} finishes within $O(n)$ time with high probability, and \textsc{Optimal-Silent-SSR} finishes within $2\cdot O(n)$ time with probability at least $1/2$ from Lemma~\ref{ssrk:reset} and Markov's inequality.
After both protocols finished correctly, the $\timer$ of the agent with $\Rank=\lceil n/2\rceil$ is no less than $28$ with probability at least $1/10$ from the latter part of the proof of Lemma~\ref{em:tswap}.
Thus, the execution reaches $\Tswap \cap \Sout$ within $O(n)$ time with probability at least $1/4\cdot 1/10\cdot (1-1/n)^2\ge 1/160$.

From Lemma~\ref{em:dec}, the execution belonged $\Tswap$ reaches $\Sdec$ within $3n-1$ time with probability $1/8$.
Thus, since there is no agent whose $\ans$ is not the exact majority opinion, the execution reaches $\Sdec \cap \Sout\subseteq \Stim$.
Therefore, the probability that the execution reaches $\Stim$ within $O(n)$ time is at least $1/160 \cdot 1/8=1/1280$.
\end{proof}

\begin{lemma}\label{em:safe}
Let $C_0\in \Stim$ and $\Xi_\Pem(C_0,\Gamma)$.
The probability that the execution reaches \( \Sem\) within \( O(n) \) time is at least \( 1/2 \).
\end{lemma}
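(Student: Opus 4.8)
The plan is to show that from any configuration $C_0 \in \Stim$, the execution reaches $\Sem$ (i.e., additionally drives the $\timer$ of the agent with $\Rank = \lceil n/2 \rceil$ down to $0$) within $O(n)$ time with probability at least $1/2$. The key observation is that a configuration in $\Stim = \Sswap \cap \Sout$ already has correct ranking, correct swapping, and all agents holding the exact majority opinion in their $\ans$. So the only thing left to argue is that the $\timer$ counts down to $0$ \emph{and} that no disruptive \textsc{Propagate-Reset} is triggered in the meantime (which it cannot be, since every agent's $\ans$ already equals the exact majority opinion, so the guard $a_i.\ans \ne a_{1-i}.\ans$ on line 21 can never fire). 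Once the $\timer$ hits $0$, the configuration lies in $\Sem$, and by Lemma~\ref{em:silent} it is silent, so it stays there.

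\begin{proof}
Let $C_0 \in \Stim = \Sswap \cap \Sout$. Since $C_0 \in \Sout$, every agent's $\ans$ is already the exact majority opinion, and since $C_0 \in \Sswap \subseteq \Srank$, all agents have $\Role = \Settled$ with distinct ranks and agents with input $\Red$ occupy the lower ranks. We first observe that no \textsc{Propagate-Reset} can be triggered from $C_0$: the only trigger in Algorithm~\ref{protocol:SSEMN} (lines 21--24) requires $a_i.\ans \ne a_{1-i}.\ans$ where $a_i$ is the agent with $\Rank = \lceil n/2 \rceil$; but all agents share the same $\ans$, so this guard is never satisfied. Moreover, the Decision lines (12--18) only overwrite $\ans$ with values that equal the exact majority opinion (the inputs of the rank-$\lceil n/2\rceil$ and rank-$(n/2+1)$ agents, which in a swapped configuration determine exactly the majority), and the Swapping lines do not change the multiset of $\ans$ values among the ranks involved. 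Hence $\Sout$ and $\Sswap$ are closed under any interaction from $C_0$, so the execution stays in $\Stim$ until (and including when) the $\timer$ reaches $0$, at which point it is in $\Sem$.

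It remains to bound the time for the $\timer$ of the agent with $\Rank = \lceil n/2 \rceil$ to reach $0$. By Lemma~\ref{em:timzero}, since $C_0 \in \Stim \subseteq \Srank$, the $\timer$ reaches $0$ within $O(n)$ time with probability at least $1/2$. Once it is $0$, the configuration satisfies all the defining conditions of $\Sem$, and by Lemma~\ref{em:silent} it is silent, so the execution remains in $\Sem$ forever. Therefore the execution starting from $C_0$ reaches $\Sem$ within $O(n)$ time with probability at least $1/2$.
\end{proof}

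The only subtlety to watch is the closure claim, i.e., verifying that no transition available in $\Stim$ can leave $\Sout$ or $\Sswap$ or prematurely set $\timer$ back up; I expect this bookkeeping over the branches of Algorithm~\ref{protocol:SSEMN} to be the main (though routine) obstacle, since the $\timer$ is only ever \emph{decremented} on line 20 and only \emph{set} on line 6 when an agent \emph{becomes} $\Settled$ with $\Rank = \lceil n/2 \rceil$, which does not happen while all agents are already $\Settled$ with fixed correct ranks.
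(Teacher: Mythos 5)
Your proof is correct and follows essentially the same route as the paper: the key step in both is to invoke Lemma~\ref{em:timzero} to get the $\timer$ of the rank-$\lceil n/2\rceil$ agent down to $0$ within $O(n)$ time with probability at least $1/2$. The only difference is that you explicitly verify the closure of $\Stim$ under the protocol's transitions (no reset can fire, the Decision lines rewrite $\ans$ only with the correct opinion), which the paper leaves implicit; this is a welcome but not divergent addition.
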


\begin{proof}
For the execution to reach \( \Sem \), it is necessary for the \( \timer \) of the agent with \( \Rank = \lceil n/2 \rceil \) to reach $0$.
From Lemma~\ref{em:timzero}, the \( \timer \) of the agent with \( \Rank = \lceil n/2 \rceil \) reaches $0$ within $O(n)$ time with probability at least $1/2$.
Thus, the execution reaches $\Sem$ within $O(n)$ time with probability at least $1/2$.
\end{proof}

\begin{lemma}\label{em:stab}
$\Pem$ reaches $\Sem$ with probability $1$, and the silence time of $\Pem$ is \( O(n) \) in expectation, and $O(n\log{n})$ with high probability.
\end{lemma}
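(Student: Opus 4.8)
The plan is to assemble the five single‑phase lemmas (Lemmas~\ref{em:rank}, \ref{em:tswap}, \ref{em:dec}, \ref{em:tim}, and~\ref{em:safe}), laid out in Table~\ref{tab:steps}, into one uniform statement: there exist a constant $p>0$ and a time bound $T=O(n)$ such that, from \emph{every} configuration $C\in\Call(\Pem)$, the execution $\Xi_\Pem(C,\Gamma)$ reaches $\Sem$ within time $T$ with probability at least $p$. To establish this I would walk down the chain
\[
\Call(\Pem)\ \to\ \Srank\ \to\ \Tswap\cup\Stim\ \to\ \Sdec\ \to\ \Stim\ \to\ \Sem ,
\]
picking up the chain at the first set that contains $C$: if $C\notin\Srank$, Lemma~\ref{em:rank} brings the execution into $\Srank$; from $\Srank$, Lemma~\ref{em:tswap} brings it into $\Tswap\cup\Stim$; if it lands in $\Stim$ only Lemma~\ref{em:safe} remains, and if it lands in $\Tswap$ we apply Lemmas~\ref{em:dec}, \ref{em:tim}, \ref{em:safe} in order. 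Taking $T$ to be the sum of the five (each $O(n)$) time bounds and $p=\tfrac1{10}\cdot\tfrac1{20}\cdot\tfrac18\cdot\tfrac1{1280}\cdot\tfrac12=\tfrac1{4096000}$ the product of the five success probabilities yields the claim, since starting already inside an intermediate set only drops steps and hence only increases the probability. By Lemma~\ref{em:silent}, $\Sem$ consists of silent configurations, so once the execution enters $\Sem$ it never leaves; thus bounding the time to reach $\Sem$ is exactly bounding the silence time.

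Given this, I would finish with a routine restart argument based on the Markov property. Cut time into consecutive length-$T$ epochs $I_1,I_2,\dots$. Since an execution under a uniform random scheduler is a time-homogeneous Markov chain, conditioning on the configuration at the start of epoch $I_j$ — which, as long as $\Sem$ has not yet been reached, is some configuration in $\Call(\Pem)$ — the execution reaches $\Sem$ during $I_j$ with probability at least $p$, independently of $I_1,\dots,I_{j-1}$. Hence the number $N$ of epochs before $\Sem$ is first reached is stochastically dominated by a geometric random variable with parameter $p$, so $\Pr(N>k)\le(1-p)^k$. Letting $k\to\infty$ gives $\Pr(\Sem\text{ never reached})=0$, i.e.\ $\Pem$ reaches $\Sem$ with probability $1$; and $\mathbb{E}[N]\le 1/p=O(1)$ gives an expected silence time of at most $T\cdot\mathbb{E}[N]=O(n)$. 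For the high-probability bound, take $k=\lceil\log_{1/(1-p)}n\rceil=\Theta(\log n)$; then $\Pr(N>k)\le(1-p)^k\le 1/n$, so within $kT=O(n\log n)$ time the execution reaches $\Sem$ with probability $1-O(1/n)$, that is, with high probability.

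The only point needing care is the first paragraph: one must check that \emph{every} configuration that can occur at the start of an epoch is covered by one of the chains above with the same floor $p$ on its success probability. This follows because $\Call(\Pem)$ is the full configuration space, each of the five lemmas only assumes the execution is in $\Call(\Pem)$ (or in a more favourable intermediate set) at the start of its step, and the inclusions $\Sem\subseteq\Stim\subseteq\Srank$, $\Sdec\subseteq\Sswap\subseteq\Srank$, $\Tswap\subseteq\Srank$ guarantee that at least one chain applies to every configuration; the Markov property then legitimately turns the epochs into Bernoulli trials with success probability at least $p$, and the geometric domination above completes the proof.
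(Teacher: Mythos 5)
Your proposal is correct and follows essentially the same route as the paper: chain the five phase lemmas to get a per-epoch success probability of $1/4096000$ within $O(n)$ time from any configuration, then use a geometric restart argument for probability $1$, the expectation (the paper writes this as $E_{\Pem}\le O(n)+(1-1/4096000)E_{\Pem}$), and amplification over $\Theta(\log n)$ epochs for the high-probability bound. Your explicit Markov-chain/stochastic-domination phrasing is if anything slightly more careful than the paper's informal ``runs long enough'' justification for the probability-$1$ claim, but the substance is identical.
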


\begin{proof}
Let $C_0\in \Call(\Pem)$ and $\Xi_\Pem(C_0,\Gamma)$.
From Lemma~\ref{em:rank},~\ref{em:tswap},~\ref{em:dec},~\ref{em:tim}, and~\ref{em:safe}, the execution reaches $\Sem$ with probability at least $1/4096000$.
Since an execution reaches \( \Sem \) with nonzero probability, the execution eventually reaches \( \Sem \) if it runs long enough.  
Thus, there exists an execution that reaches a silent configuration from any initial configuration; therefore, \( \Pem \) reaches \( \Sem \) with probability 1.

Let $E_{\Pem}$ be the expected time for the execution to reach $\Sem$.
From the above, $E_{\Pem}\le O(n)+(1-1/4096000)E_{\Pem}$ holds.
Thus, $E_{\Pem}=O(n)$.
From Markov's inequality, the execution reaches $\Sem$ within $2\cdot O(n)$ time with probability at least $1/2$.
Thus, the execution reaches $\Sem$ within $2\cdot O(n)\log{n}=O(n\log{n})$ time with probability at least $1-(1/2)^{\log{n}}=1-1/n$.
\end{proof}

From Theorem~\ref{lb:states}, ~\ref{lb:time}, and Lemma~\ref{em:stab}, we derive the following theorem.

\begin{theorem}
\( \Pem \) is a time- and space-optimal silent self-stabilizing exact majority protocol that uses \( O(n) \) states and reaches a silent configuration within \( O(n) \) time in expectation and within \( O(n \log n) \) time with high probability.
\end{theorem}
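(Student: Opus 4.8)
The plan is to obtain the theorem by combining the lower bounds of Theorems~\ref{lb:states} and~\ref{lb:time} with the analysis of $\Pem$ in Lemmas~\ref{em:silent} and~\ref{em:stab}, after first confirming that $\Pem$ actually satisfies the definition of a self-stabilizing exact majority protocol. \textbf{Correctness.} The syntactic requirements $X=\{\Red,\Blue\}$ and $Y=\{\Red,\Blue,\tie\}$ hold by construction, and whenever every agent's $\ans$ equals the exact majority opinion $y$ the output function returns $y$ (note that if $y=\tie$ then $\ans=\tie$ and the output is $\tie$, consistent with the convention that $\ans=\phi$ also outputs $\tie$). For the liveness clause of the definition, fix any initial configuration $C_0$: Lemma~\ref{em:stab} gives that the execution from $C_0$ reaches $\Sem$ with probability $1$, and by Lemma~\ref{em:silent} every configuration in $\Sem$ is silent, hence safe, since outputs are determined by the (frozen) state together with the fixed input. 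For the safety clause, I would take an arbitrary safe configuration $C_{\mathrm{safe}}$ reachable from $C_0$, apply Lemma~\ref{em:stab} with $C_{\mathrm{safe}}$ as the initial configuration to reach a configuration in $\Sem\subseteq\Sout$ with probability $1$, and observe that in such a configuration all agents output the exact majority opinion; since $C_{\mathrm{safe}}$ is safe, no output changes along that execution, so all agents already output the exact majority opinion in $C_{\mathrm{safe}}$. The same reasoning, together with Lemma~\ref{em:silent}, shows $\Pem$ is silent.

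\textbf{Optimality.} Given correctness, Lemma~\ref{em:stab} bounds the silence time of $\Pem$ by $O(n)$ in expectation and $O(n\log n)$ with high probability, and the state bound $O(n)$ is immediate from the construction (the variables $\ans$ and $\timer$ contribute $O(1)$ states and \textsc{Optimal-Silent-SSR} contributes $O(n)$). For the matching lower bounds I would invoke Theorem~\ref{lb:states} for the $\Omega(n)$ state bound and Theorem~\ref{lb:time} for the $\Omega(n)$ expected and $\Omega(n\log n)$ with-high-probability stabilization-time bounds; since reaching a silent configuration entails reaching a safe one, silence time is always at least stabilization time, so these are also lower bounds on the silence time of any silent self-stabilizing exact majority protocol, including $\Pem$. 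Hence the upper and lower bounds coincide and $\Pem$ is time- and space-optimal.

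\textbf{Main obstacle.} The probabilistic bookkeeping that multiplies the per-step success probabilities and sums the $O(n)$ times is already carried out in Lemma~\ref{em:stab}, so nothing new is needed there. The only point requiring care is the safety clause of the definition, namely ruling out ``spurious'' safe configurations in which the agents have frozen their outputs to an incorrect majority value: this is exactly where the argument above runs Lemma~\ref{em:stab} forward from the putative safe configuration and exploits $\Sem\subseteq\Sout$. Once that observation is recorded, the theorem is just an assembly of the cited results.
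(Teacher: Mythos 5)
Your proposal is correct and follows essentially the same route as the paper, which derives the theorem directly by combining Theorems~\ref{lb:states} and~\ref{lb:time} with Lemma~\ref{em:stab} (the paper states this in a single sentence without further elaboration). Your additional verification of the safety clause of the definition—running Lemma~\ref{em:stab} forward from a putative safe configuration and using $\Sem\subseteq\Sout$ to rule out incorrectly frozen outputs—is a detail the paper leaves implicit, and it is handled correctly.
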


\section{Conclusion and Discussions}\label{sec:conclude}
We addressed a silent protocol that solves the self-stabilizing exact majority problem.  
We showed that no protocol can solve this problem without knowledge of \( n \).  
We proposed a silent protocol within \( O(n) \) time in expectation and \( O(n \log n) \) time with high probability, using $O(n)$ states, with knowledge of \( n \).
We established lower bounds, proving that any protocol requires \( \Omega(n) \) states, \( \Omega(n) \) time in expectation, and \(\Omega(n\log{n})\) time with high probability to reach a safe configuration.  
Thus, the proposed protocol is time- and space-optimal.

We propose the following open problems:
An extension of the self-stabilizing majority problem to the exact plurality consensus problem~\cite{BBBEHKK22}, in which agents have $k$ different opinions, can also be considered.
Whether this problem can achieve the lower bound shown in this paper remains an intriguing open question.
Additionally, it is worth considering a non-silent self-stabilizing exact majority protocol with fewer states.

\paragraph*{Acknowledgments.}
This work was supported by JSPS KAKENHI Grant Numbers JP22H03569, JP23K28037, and JP25K03101.

\bibliographystyle{plain}
\bibliography{ref}

\end{document}